\def\d{{\, \rm d}}
\newtheorem{theorem}{Theorem}[section]
\newtheorem{proposition}[theorem]{Proposition}
\begin{document}

\begin{frontmatter}



\title{Launching Drifter Observations in the Presence of Uncertainty}


\author[label1]{Nan Chen\corref{cor1}}
\ead{chennan@math.wisc.edu}
\affiliation[label1]{organization={Department of Mathematics, University of Wisconsin-Madison},
            addressline={480 Lincoln Dr.},
            city={Madison},
            postcode={53706},
            state={WI},
            country={United States of America}}
\author[label2]{Evelyn Lunasin}
\affiliation[label2]{organization={Department of Mathematics, United States Naval Academy},
            addressline={Chauvenet Hall, 572C Holloway Road},
            city={Annapolis},
            postcode={21402-5002},
            state={MD},
            country={United States of America}}
\author[label2,label3]{Stephen Wiggins}
\affiliation[label3]{organization={School of Mathematics, University of Bristol},
            addressline={Fry Building, Woodland Road},
            city={Bristol},
            postcode={BS8 1UG},
            country={United Kingdom}}
\begin{abstract}
Determining the optimal locations for placing extra observational measurements has practical significance. However, the exact underlying flow field is never known in practice. Significant uncertainty appears when the flow field is inferred from a limited number of existing observations via data assimilation or statistical forecast. In this paper, a new computationally efficient strategy for deploying Lagrangian drifters that highlights the central role of uncertainty is developed. A nonlinear trajectory diagnostic approach that underlines the importance of uncertainty is built to construct a phase portrait map. It consists of both the geometric structure of the underlying flow field and the uncertainty in the estimated state from Lagrangian data assimilation. The drifters are deployed at the maxima of this map and are required to be separated enough. Such a strategy allows the drifters to travel the longest distances to collect both the local and global information of the flow field. It also facilitates the reduction of a significant amount of uncertainty. To characterize the uncertainty, the estimated state is given by a probability density function (PDF). An information metric is then introduced to assess the information gain in such a PDF, which is fundamentally different from the traditional path-wise measurements. The information metric also avoids using the unknown truth to quantify the uncertainty reduction, making the method practical. Mathematical analysis exploiting simple illustrative examples is used to validate the strategy. Numerical simulations based on multiscale turbulent flows are then adopted to demonstrate the advantages of this strategy over some other methods.
\end{abstract}


\begin{keyword}
Drifter deployment strategy\sep uncertainty\sep Lagrangian data assimilation\sep Lagrangian descriptor\sep information metric
\MSC 37N10  \sep 93E11  \sep 62E17  \sep 37J25
\end{keyword}

\end{frontmatter}

\section{Introduction}

With the recent advances in various measurement techniques, including satellite images, infrared observatories, drifters, and sensors, observations play an essential role in improving the state estimation and the reconstruction of the flow fields. However, the observational network remains sparse or incomplete in many situations, especially in geophysics and climate science, to accurately recover the entire multiscale flow field. Several data-driven methods have thus been developed to infer the dominant flow structures from data through proper orthogonal decomposition (POD) \cite{willcox2006unsteady, bui2004aerodynamic}, dynamic mode decomposition (DMD) \cite{kramer2017sparse, dang2021dmd} or the discrete empirical interpolation method (DEIM) \cite{drmac2016new, wang2021feasibility}. Sparse optimization is often incorporated into these approaches to avoid overfitting and provide robust results \cite{manohar2018data, herzog2015sequentially, chu2021data}. Recently, neural networks, exploiting deep learning or autoencoder, have also been applied to seek the principle components of the underlying flow from data in a nonlinear way \cite{carlberg2019recovering, fukami2019super, erichson2020shallow}. In addition to these vectorized methods, a tensor-based flow reconstruction from optimally located sensor measurements has been developed that improves the computational efficiency \cite{farazmand2023tensor}. On the other hand, by combining sparse observational data with appropriate physical or statistical models, data assimilation is a widely used tool to infer the underlying flow field via  Bayesian inference \cite{asch2016data, kalnay2003atmospheric, majda2012filtering, law2015data, ghil1991data}. Notably, the observational measurements in data assimilation do not necessarily need to be the velocity field or, in general, the quantity of interest to reconstruct. Different state variables are coupled through the model; therefore, the observational information in one variable can affect the state estimation of the others. This is a unique advantage of data assimilation over many purely data-driven methods that require the flow fields to be directly measured. In addition, the models usually supply crucial prior knowledge of the system, facilitating the accurate recovery of the flow field. Nevertheless, the skill of the inferred flow field relies heavily on the additional information provided by the observations. It is thus significant to determine the most appropriate observational variables and the optimal locations for placing these observational measurements. While the former depends largely on the nature of the physical interests and is often case-dependent, systematic strategies can be developed for optimally choosing the observational sites once the measured quantities are decided.

Depending on how flow fields are measured, observations can be divided into two categories \cite{batchelor1973introduction, lamb1924hydrodynamics}. Eulerian observations are measurements at fixed locations in space through which the fluid flows. In contrast, Lagrangian observations, also known as Lagrangian tracers, are drifters or floaters that collect information when they travel following the movement of the fluids. They become more prevalent with the advance of up-to-date measuring techniques. These Lagrangian observations can supplement traditional Eulerian measurements or be used as stand-alone observations to recover the underlying flow field. Several Lagrangian observational products have been developed in recent years. The Global Drifter Program \cite{centurioni2017global} estimates near-surface currents by tracking the surface drifters deployed throughout the global ocean. The Argo program \cite{gould2004argo} collects information from inside the ocean using a fleet of robotic instruments that drift with the ocean currents and move up and down between the surface and a mid-water level. In addition, sea ice floe trajectories in the marginal ice zone serve as natural Lagrangian data to understand the mesoscale features of the Arctic Ocean \cite{mu2018arctic, chen2022efficient, covington2022bridging}. Other Lagrangian tracers include trash or oil in the ocean \cite{van2012origin, garcia2022structured} and balloons collecting atmospheric data \cite{businger1996balloons}. These Lagrangian observations are widely used for state estimation and prediction in geophysics, climate science, and hydrology \cite{griffa2007lagrangian, blunden2019look, honnorat2009lagrangian, salman2008using, castellari2001prediction}. In particular, the observed Lagrangian trajectories are often combined with models to facilitate the use of data assimilation, known as Lagrangian data assimilation, in recovering the underlying flow field \cite{apte2013impact, apte2008data, apte2008bayesian, ide2002lagrangian}.

The locations for deploying the Lagrangian drifters determine the amount of information of the flow field inferred from these observations, at least within a certain time interval. Strategies of drifter deployment are typically based on the geometric properties of the underlying flow. Launching the drifters in strongly hyperbolic regions was found to provide significant improvement in reconstructing the flow field \cite{poje2002drifter}. A similar conclusion was reached in \cite{salman2008using}, which uses Lagrangian data assimilation to indicate that it is best to target the strongest hyperbolic trajectories for shorter forecasts, although vortex centers can produce good drifter dispersion upon bifurcating on longer timescales. In \cite{treshnikov1986optimal}, a different criterion was developed, which maximizes a collection of functionals based on the spatial degree of the initial data coverage and averaged quantities, including kinetic energy and enstrophy. Machine learning and artificial intelligence have also been used to find suitable strategies. It was shown that the optimal strategy for minimizing a certain norm of velocity error is to deploy the drifters at the center of different segments from a clustering algorithm \cite{tukan2023efficient}. In addition, genetic algorithms were applied, aiming to find the globally optimal drifter launch locations \cite{hernandez1995optimizing, chen2020information}. Other methods in determining the observational locations include exploiting certain low-order reconstructed fields \cite{berliner1999statistical, ballabrera2007observing} or placing the observations aligning with the most sensitive direction \cite{wu2010optimal}.

One of the most significant features and fundamental challenges for deploying Lagrangian drifters in many practical situations is the uncertainty of the underlying flow field. Uncertainty is ubiquitous due to the lack of a full understanding of nature. It arises as a natural outcome since observations are often noisy and the models describing the underlying flow field are rarely perfect. It also appears when ensemble forecast is applied for predicting the future turbulent flow patterns that assist in determining the locations for launching the drifters. Notably, the geometric structure recovered from the inferred flow field when uncertainty is considered can be significantly different from that by ignoring the uncertainty \cite{chen2023lagrangian} and thus leads to a completely different result of deploying drifter observations. However, uncertainty was seldom addressed in most existing work. Many theoretic studies exploit the true flow field, which is unknown in practice, to reveal the geometric structure of the underlying flow that facilitates the analysis of launching drifters \cite{poje2002drifter}. Other studies considered more realistic setups, where the flow field was inferred from Lagrangian data assimilation \cite{salman2008using}. Yet, only the posterior mean estimate was used. Such a path-wise approximation of the truth is valid when the uncertainty in the inference is negligible. However, the resulting uncertainty can be significant in many practical situations when inferring the flow field using only a limited number of existing observations. Therefore, considering the uncertainty in designing strategies for deploying new drifters is essential.

In this paper, a new strategy for deploying Lagrangian drifters that highlights the central role of uncertainty is developed. The underlying flow field is recovered by a certain number of existing drifter observations via a computationally efficient data assimilation algorithm. Fundamentally different from many existing methods, the entire posterior distribution from the Lagrangian data assimilation is considered in this strategy to help determine the locations of launching a set of additional drifters. A recently developed nonlinear trajectory diagnostic approach that underlines the importance of the uncertainty is utilized to construct a phase portrait map \cite{chen2023lagrangian}. It consists of both the geometric structure of the underlying flow field and the uncertainty in the recovered state. The new drifters are launched at the maxima of such a phase portrait map. These locations either correspond to significant flow velocities related to the strong hyperbolic trajectories or contain substantial uncertainties. Adding the new drifters at these locations will allow them to travel long distances, collect global information on the flow field, or advance the reduction of uncertainties. In addition, the areas of the drifter deployment are required to separate enough to avoid the information captured being overlapped.

The new strategy aims to maximize the information gain in the posterior distribution from the Lagrangian data assimilation after the new drifters are launched into the field. As the quantity of interest is the entire distribution for the state estimation, an information metric called relative entropy \cite{kullback1951information, majda2005information, kleeman2011information} is introduced to assess the performance of the strategy. Fundamentally different from the traditional path-wise measurements, the information metric not only quantifies the information captured by the posterior mean state but also evaluates the uncertainty reduction in the covariance or the entire probability density function (PDF). The total information gain is measured by comparing the posterior distribution from the Lagrangian data assimilation with the model equilibrium distribution. The latter is known as the prior distribution, inferred only from the model without exploiting any observational information. Notably, the quantification process does not require the unknown truth, which is more practical and significantly distinguishes it from most path-wise measurements. In addition, the information metric is not directly utilized as the cost function to build a complicated optimization problem. The locations for deploying the new drifters are fully determined by the cheap nonlinear trajectory diagnostic method and the drifter separation criterion, which allow the strategy to be computationally efficient. Finally, although the focus of this work is on maximizing the information gain in the posterior distribution from data assimilation, the framework developed here can be easily applied to reducing the uncertainty in the real-time forecast of the flow field.

The rest of the paper is organized as follows. Section \ref{Sec:Modeling} includes a general stochastic modeling framework for describing random flow fields and the associated Lagrangian data assimilation. The information metric is introduced in Section \ref{Sec:Info}, which is used to quantify the uncertainty reduction when drifters are placed at different locations. Section \ref{Sec:Strategies} presents the mathematical strategy of placing drifter observations. A multiscale turbulent flow is utilized to illustrate the performance of the new strategy, which is shown in Section \ref{Sec:Numerics}. Section \ref{Sec:Conclusion} includes the discussions and conclusion.

\section{Modeling Random Flow Fields and Lagrangian Data Assimilation}\label{Sec:Modeling}
\subsection{Setup of the problem}
Let us begin by stating the setup of the problem. Let us assume there are currently $L_1$ Lagrangian drifters deployed and collecting measurements within the flow field of interest. The trajectories of these $L_1$ drifters are exploited to infer the underlying flow field, which is then used to determine the locations of deploying $L_2$ new drifters at time $t^*$. The goal is to maximize the information gain in the posterior distribution of the data assimilation solution within the time interval $[t^*-\tau,t^*+\tau]$, with $\tau>0$, in light of the $L_1+L_2$ observed Lagrangian trajectories. During this process, once the locations of the $L_2$ new drifters at $t^*$ are determined, the governing equations of these drifters are used to integrate forward and backward in time to create trajectories. These $L_2$ trajectories, together with the existing $L_1$ ones, are then used to compute the posterior distribution within this interval. Although this setup is not real-time, it provides an effective way to study the information gain within $[t^*-\tau,t^*+\tau]$ due to the deployment of the $L_2$ new drifters, where the nonlinear trajectory diagnostic for the phase portrait analysis with uncertainty to determine the locations of launching these drifters is applied to the same interval. The framework can be easily extended to the real-time situation. See the discussions in Section \ref{Sec:Conclusion}. The mathematical model for the flow field and an efficient data assimilation algorithm are described in the following.

\subsection{Modeling random flow fields}
The underlying turbulent flow model utilized in this work is assumed to be given by a finite summation of spectral modes. Random spectral coefficients are adopted to mimic the intrinsic turbulent features in the flow field, which have been widely used in practice \cite{chen2015noisy, majda2003introduction}. In such a modeling framework, the underlying flow velocity field reads
\begin{equation}\label{Ocean_Velocity}
  \mathbf{u}(\mathbf{x},t) = \sum_{\mathbf{k}\in\mathcal{K},\alpha\in\mathcal{A}}\hat{u}_{\mathbf{k},\alpha}(t)e^{i\mathbf{k}\mathbf{x}}\mathbf{r}_{\mathbf{k},\alpha},
\end{equation}
where $\mathbf{x}=(x,y)^\mathtt{T}$ is the two-dimensional coordinate with a double periodic boundary condition $x,y\in[-\pi,\pi]$. There are two indices for the spectral modes. The index $\mathbf{k}=(k_1,k_2)^\mathtt{T}$ is the wavenumber, and the index $\alpha$ represents the characteristic of the mode, including, for example, the gravity modes and the geophysically balanced modes in the study of many geophysical systems. We denote the set of $\alpha$ by $\mathcal{A}$. The set $\mathcal{K}$ consists of the wavenumbers that satisfy $-K_{\mbox{max}}\leq k_1, k_2\leq K_{\mbox{max}}$ with $K_{\mbox{max}}$ being an integer that is pre-determined. The vector $\mathbf{r}_{\mathbf{k},\alpha}$ is the eigenvector, which links the two components of velocity fields $u$ and $v$. For conciseness of notations, the explicit dependence of $\alpha$ on $\hat{u}_{\mathbf{k},\alpha}$ and $\mathbf{r}_{\mathbf{k},\alpha}$ in \eqref{Ocean_Velocity} is omitted in the following discussions. This is a natural simplicity when only one type of characteristic mode is used, which will be the case in the numerical simulations of this work. There, the incompressible flow is considered that includes only the geophysically balanced modes. Therefore, the Fourier coefficient and the eigenvector are written as $\hat{u}_{\mathbf{k}}$ and $\mathbf{r}_{\mathbf{k}}$. Since the left-hand side of \eqref{Ocean_Velocity} is evaluated at physical space, the Fourier coefficient $\hat{u}_{-\mathbf{k}}$ and the eigenvector $\mathbf{r}_{-\mathbf{k}}$ are the complex conjugate of $\hat{u}_{\mathbf{k}}$ and $\mathbf{r}_{\mathbf{k}}$, respectively, for all $\mathbf{k}$. It is worth noting that the framework is not limited to the Fourier basis. Other basis functions and boundary conditions can be utilized in \eqref{Ocean_Velocity} for various applications in practice. Therefore, the representation in \eqref{Ocean_Velocity} is general.

Stochastic models are used to describe the time evolution of each Fourier coefficient $\hat{u}_{\mathbf{k}}$ in \eqref{Ocean_Velocity}, which is a computationally efficient way to mimic the turbulent flow features. Among different stochastic models, the linear stochastic model, namely the complex Ornstein-Uhlenbeck (OU) process \cite{gardiner1985handbook}, is a widely used choice:
\begin{equation}\label{OU_process}
  \frac{\d\hat{u}_{\mathbf{k}}}{\d t} = (- d_\mathbf{k} + i\omega_\mathbf{k}) \hat{u}_{\mathbf{k}} + \mathbf{f}(t) + \sigma_\mathbf{k}\dot{W}_\mathbf{k},
\end{equation}
where $d_\mathbf{k}, \omega_\mathbf{k}$ and $\mathbf{f}(t)$ are damping, phase and deterministic forcing, $\sigma_\mathbf{k}$ is the noise coefficient and $\dot{W}_\mathbf{k}$ is a white noise. The constants $d_\mathbf{k}$, $\omega_\mathbf{k}$ and $\sigma_\mathbf{k}$ are real-valued while the forcings are complex. The stochastic noise in the linear stochastic model is utilized to effectively parameterize the nonlinear deterministic time evolution of chaotic or turbulent dynamics \cite{majda2016introduction, farrell1993stochastic, berner2017stochastic, branicki2018accuracy, majda2018model, li2020predictability, harlim2008filtering, kang2012filtering} such that different Fourier coefficients (excluding those complex conjugate pairs) are independent with respect to each other. This significantly reduces the computational cost as the operations on the summation of complicated nonlinear terms are replaced by a single stochastic term. The decoupled equations for different modes also accelerate the model forecast. These features are particularly useful for efficient data assimilation since the forecast focuses on the statistics instead of the precise value of each single trajectory. Note that the decoupling between different spectral modes does not break the spatial dependence between the state variables at different grid points in physical space, which is automatically recovered after the spatial reconstruction in light of all the spectral modes.

The mathematical framework of modeling random flow fields in \eqref{Ocean_Velocity}--\eqref{OU_process} has been widely applied to studying turbulent flows. Examples include modeling the rotating shallow water equation \cite{chen2015noisy} and the quasi-geostrophic equation \cite{chen2023stochastic}. Such a framework has also been utilized as an effective surrogate forecast model in data assimilation to recover the flow fields associated with the Navier-Stokes equations \cite{branicki2018accuracy}, the moisture-coupled tropical waves \cite{harlim2013test} and a nonlinear topographic barotropic model \cite{chen2023uncertainty}. In addition, the framework has been used to quantify the uncertainty in geophysical turbulent flows \cite{branicki2013non, chen2023uncertainty, chen2016model}.

\subsection{Lagrangian data assimilation}
Lagrangian data assimilation exploits the observed moving trajectories from drifters to infer the underlying velocity field \cite{apte2013impact, apte2008data, apte2008bayesian, ide2002lagrangian}. It is a widely used approach for state estimation and prediction in geophysics, climate science, and hydrology \cite{griffa2007lagrangian, blunden2019look, honnorat2009lagrangian, salman2008using, castellari2001prediction}.

The observational process is given by the evolution equation of the Lagrangian trajectory,
\begin{equation}\label{Tracer_eqn}
\frac{\d\mathbf{x}}{\d t} = \mathbf{u}(\mathbf{x},t) + \sigma_\mathbf{x}\mathbf{W}_\mathbf{x},
\end{equation}
where $\mathbf{W}_\mathbf{x}$ is a two-dimensional real-valued white noise representing the observational uncertainty and small-scale perturbations to the observed drifter trajectories while $\sigma_\mathbf{x}$ is the noise coefficient. The velocity field $\mathbf{u}$ in \eqref{Tracer_eqn} is given by \eqref{Ocean_Velocity}, which is a highly nonlinear function of $\mathbf{x}$. Usually, $L$ equations of \eqref{Tracer_eqn} are used in Lagrangian data assimilation, representing the observed trajectories of $L$ Lagrangian drifters.

Define $\mathbf{X}=(\mathbf{x}_1,\ldots,\mathbf{x}_L)^\mathtt{T}$ the collection of the $L$ observed drifter trajectories and $\mathbf{U}=\{\hat{u}_\mathbf{k}\}$ the vector that collects the Fourier coefficients. In light of \eqref{Ocean_Velocity}, \eqref{OU_process} and \eqref{Tracer_eqn}, the Lagrangian data assimilation can be written in the following form:
\begin{subequations}\label{eq:cgns}
\begin{align}
\frac{\d \mathbf{X}(t)}{\d t} &= \mathbf{A}(\mathbf{X}, t) \mathbf{U}(t) + \sigma_\mathbf{x} \dot{\mathbf{W}}_\mathbf{X}(t),\label{eq:cgns_X}\\
\frac{\d \mathbf{U}(t)}{\d t} &=  \mathbf{F}_\mathbf{U} + \boldsymbol{\Lambda} \mathbf{U}(t)  + \boldsymbol{\Sigma}_\mathbf{U} \dot{\mathbf{W}}_\mathbf{U}(t),\label{eq:cgns_U}
\end{align}
\end{subequations}
where $\mathbf{A}(\mathbf{X}, t)$ contains all the Fourier bases and is, therefore, a highly nonlinear function of $\mathbf{X}$. Despite the strong nonlinearity in the observational process \eqref{eq:cgns_X}, analytic solutions are available for the Lagrangian data assimilation when the linear stochastic models are used as the surrogate forecast model \cite{liptser2013statistics}. This facilitates the state estimation and uncertainty quantification \cite{chen2014information}.

\begin{proposition}[Posterior distribution of Lagrangian data assimilation: Filtering]
Given one realization of the drifter trajectories $\mathbf{X}(s\leq t)$, the filtered posterior distribution $p(\mathbf{U}(t)|\mathbf{X}(s\leq t))$ of Lagrangian data assimilation \eqref{eq:cgns} is conditionally Gaussian,
 where the time evolution of the conditional mean $\boldsymbol\mu$ and the conditional covariance $\bf R$ are given by
\begin{subequations}\label{eq:filter}
\begin{align}
\frac{\d\boldsymbol{\mu}}{\d t} &= \left(\mathbf{F}_\mathbf{U} + \boldsymbol{\Lambda} \boldsymbol{\mu}\right)  + \sigma_\mathbf{x}^{-2}\mathbf{R}\mathbf{A}^\ast\left(\frac{\d \mathbf{X}}{\d t} - \mathbf{A}\boldsymbol{\mu} \right),\label{eq:filter_mu}\\
\frac{\d\mathbf{R}}{\d t} &= \boldsymbol{\Lambda}\mathbf{R} + \mathbf{R}\boldsymbol{\Lambda}^\ast + \boldsymbol{\Sigma}_\mathbf{U}\boldsymbol{\Sigma}_\mathbf{U}^\ast - \sigma_x^{-2}\mathbf{R}\mathbf{A}^\ast\mathbf{A}\mathbf{R},\label{eq:filter_R}
\end{align}
\end{subequations}
with $\cdot^*$ being the complex conjugate transpose.
\end{proposition}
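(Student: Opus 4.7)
The plan is to cast the coupled system \eqref{eq:cgns} into the conditionally Gaussian nonlinear filtering framework of Liptser-Shiryaev \cite{liptser2013statistics} and then derive the filter equations by projecting the signal dynamics onto the observation filtration. The pivotal structural observation is that although $\mathbf{A}(\mathbf{X},t)$ is a highly nonlinear function of $\mathbf{X}$, once one conditions on the observed path $\mathbf{X}(s\leq t)$, the matrix $\mathbf{A}(\mathbf{X}(s),s)$ becomes a \emph{known} time-dependent coefficient along that path. With respect to the filtration $\mathcal{F}_t^\mathbf{X}:=\sigma(\mathbf{X}(s):s\leq t)$, equation \eqref{eq:cgns_X} becomes linear in $\mathbf{U}$ with $\mathcal{F}_t^\mathbf{X}$-measurable coefficients, and equation \eqref{eq:cgns_U} is already linear in $\mathbf{U}$ with constant coefficients and additive Gaussian noise. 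This is precisely the hypothesis structure of the conditionally Gaussian model.

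First, I would establish that $p(\mathbf{U}(t)\mid \mathcal{F}_t^\mathbf{X})$ is Gaussian. The cleanest route is a time-discretization induction: assume the conditional law at $t_n$ is Gaussian; over a short interval $[t_n,t_{n+1}]$, freeze $\mathbf{A}(\mathbf{X},t)$ at $\mathbf{A}(\mathbf{X}(t_n),t_n)$, so that the joint increment $(\Delta\mathbf{X},\mathbf{U}(t_{n+1}))$ conditional on $\mathcal{F}_{t_n}^\mathbf{X}$ is jointly Gaussian. Applying the standard Gaussian conditioning formula to $\mathbf{U}(t_{n+1})$ given $\Delta\mathbf{X}$ preserves Gaussianity at $t_{n+1}$. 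Passing to the continuum limit (with the usual Itô-level control of the remainder) yields the conditional Gaussianity.

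Second, I would derive \eqref{eq:filter_mu} and \eqref{eq:filter_R}. Defining the innovation $\d\boldsymbol{\nu}:=\d\mathbf{X}-\mathbf{A}\boldsymbol{\mu}\,\d t$, a standard argument shows $\sigma_\mathbf{x}^{-1}\d\boldsymbol{\nu}$ is an $\mathcal{F}_t^\mathbf{X}$-Brownian motion. Projecting \eqref{eq:cgns_U} onto $\mathcal{F}_t^\mathbf{X}$ yields the drift $\mathbf{F}_\mathbf{U}+\boldsymbol{\Lambda}\boldsymbol{\mu}$ plus a correction proportional to the conditional covariance of $\mathbf{U}$ with the innovation noise; the latter is $\mathbf{R}\mathbf{A}^{\ast}$, giving the Kalman-type gain $\sigma_\mathbf{x}^{-2}\mathbf{R}\mathbf{A}^\ast$ in \eqref{eq:filter_mu}. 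For \eqref{eq:filter_R}, I would apply Itô's formula to $(\mathbf{U}-\boldsymbol{\mu})(\mathbf{U}-\boldsymbol{\mu})^\ast$, take conditional expectation with respect to $\mathcal{F}_t^\mathbf{X}$, and collect terms: the $\boldsymbol{\Lambda}\mathbf{R}+\mathbf{R}\boldsymbol{\Lambda}^\ast+\boldsymbol{\Sigma}_\mathbf{U}\boldsymbol{\Sigma}_\mathbf{U}^\ast$ contributions come from the signal dynamics, while the Riccati correction $-\sigma_\mathbf{x}^{-2}\mathbf{R}\mathbf{A}^\ast\mathbf{A}\mathbf{R}$ comes from subtracting the quadratic variation of the update term in \eqref{eq:filter_mu}.

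The main obstacle I expect is handling the pathwise dependence of $\mathbf{A}(\mathbf{X}(t),t)$ in the Riccati equation. Unlike the classical Kalman-Bucy case, $\mathbf{R}$ is not deterministic here but is $\mathcal{F}_t^\mathbf{X}$-measurable through $\mathbf{A}$; one must argue that the third and fourth conditional cumulants of $\mathbf{U}\mid\mathcal{F}_t^\mathbf{X}$ nonetheless vanish so that the Gaussian ansatz self-consistently closes, and that the nonlinear term $\mathbf{A}(\mathbf{X},t)\mathbf{U}$ in \eqref{eq:cgns_X} decomposes cleanly into an $\mathcal{F}_t^\mathbf{X}$-measurable factor times the hidden state. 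This is precisely what Liptser-Shiryaev's framework supplies, and the measurability/integrability hypotheses required there are easily verified since $\mathbf{A}$ is built from bounded Fourier bases evaluated along continuous trajectories and $\mathbf{U}$ obeys a linear Gaussian SDE with globally Lipschitz drift.
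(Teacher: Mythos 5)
Your proposal is correct and follows essentially the same route as the paper, which simply defers to the conditionally Gaussian filtering theory of Liptser--Shiryaev (and its exposition in Chen--Majda): the key point in both is that conditioning on $\mathbf{X}(s\leq t)$ renders $\mathbf{A}(\mathbf{X},t)$ a known $\mathcal{F}_t^{\mathbf{X}}$-measurable coefficient, after which the discretization argument for conditional Gaussianity and the innovation/It\^o derivation of the mean and Riccati equations are exactly the standard ones in that reference. No substantive difference to report.
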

\begin{proof}
The proof can be found in \cite{liptser2013statistics, chen2018conditional}.
\end{proof}

With the filtered solution \eqref{eq:filter} in hand, closed analytic formulae are also available for the smoothing solution.

\begin{proposition}[Posterior distribution of Lagrangian data assimilation: Smoothing]\label{Prop:Smoothing}
Given one realization of the drifter trajectories $\mathbf{X}(t)$ for $t\in[0,T]$, the smoother estimate $p(\mathbf{U}(t)|\mathbf{X}(s), s\in[0,T])\sim\mathcal{N}(\boldsymbol\mu_\mathbf{s}(t),\mathbf{R}_\mathbf{s}(t))$ of the coupled system is also Gaussian,
where the conditional mean $\boldsymbol\mu_\mathbf{s}(t)$ and conditional covariance $\mathbf{R}_\mathbf{s}(t)$ of the smoother satisfy the following backward equations
\begin{subequations}\label{Smoother_Main}
\begin{align}
  \frac{\overleftarrow{\d \boldsymbol{\mu}_\mathbf{s}}}{\d t} &=  -\mathbf{F}_\mathbf{U} - \boldsymbol\Lambda\boldsymbol{\mu}_\mathbf{s}  + (\boldsymbol{\Sigma}_\mathbf{U}\boldsymbol{\Sigma}_\mathbf{U}^*)\mathbf{R}^{-1}(\boldsymbol\mu - \boldsymbol{\mu}_\mathbf{s}),\label{Smoother_Main_mu}\\
  \frac{\overleftarrow{\d \mathbf{R}_\mathbf{s}}}{\d t} &= - (\boldsymbol\Lambda + (\boldsymbol{\Sigma}_\mathbf{U}\boldsymbol{\Sigma}_\mathbf{U}^*) \mathbf{R}^{-1})\mathbf{R}_\mathbf{s} - \mathbf{R}_\mathbf{s}(\boldsymbol\Lambda^* + (\boldsymbol{\Sigma}_\mathbf{U}\boldsymbol{\Sigma}_\mathbf{U}^*)\mathbf{R})  + \boldsymbol{\Sigma}_\mathbf{U}\boldsymbol{\Sigma}_\mathbf{U}^* ,\label{Smoother_Main_R}
\end{align}
\end{subequations}
with $\boldsymbol\mu$ and $\mathbf{R}$ being given by \eqref{eq:filter}. The notation $\overleftarrow{\d \cdot}/\d t$ corresponds to the negative of the usual derivative, which means that the system \eqref{Smoother_Main} is solved backward over $[0,T]$ with the starting value of the nonlinear smoother being the same as the filter estimate $(\boldsymbol\mu_\mathbf{s}(T), \mathbf{R}_\mathbf{s}(T)) = (\boldsymbol\mu(T), \mathbf{R}(T))$.
\end{proposition}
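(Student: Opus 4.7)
The plan is to establish three things in sequence: conditional Gaussianity of the smoother, the terminal condition at $t=T$, and the backward evolution of the mean and covariance. Conditional Gaussianity is essentially inherited from the filter setup cited in the previous proposition. Once the entire path $\mathbf{X}(s),\,s\in[0,T]$ is fixed, the coefficient $\mathbf{A}(\mathbf{X},t)$ in \eqref{eq:cgns_X} becomes a deterministic (if highly nonlinear in $\mathbf{X}$) time-dependent matrix, so the pair \eqref{eq:cgns_X}--\eqref{eq:cgns_U} reduces to a linear Gaussian state-space model in $\mathbf{U}$. The same Liptser--Shiryaev style argument that gave the conditional Gaussianity of the filter then applies to the smoother, so $p(\mathbf{U}(t)\,|\,\mathbf{X}(s),\,s\in[0,T])$ is characterised entirely by its mean $\boldsymbol\mu_{\mathbf s}(t)$ and covariance $\mathbf R_{\mathbf s}(t)$.

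The terminal condition is immediate: at $t=T$ the smoother and the filter are conditioned on the same information $\mathbf X(s),\,s\in[0,T]$, hence $(\boldsymbol\mu_{\mathbf s}(T),\mathbf R_{\mathbf s}(T))=(\boldsymbol\mu(T),\mathbf R(T))$. To obtain the backward evolution, I would use the continuous-time Rauch--Tung--Striebel factorisation. By the Markov property of $(\mathbf U,\mathbf X)$ together with Bayes,
\begin{equation*}
p_{\mathbf s}(\mathbf U,t)=p_f(\mathbf U,t)\int \frac{p_{t,t+\Delta}(\mathbf U'\,|\,\mathbf U)}{p_f(\mathbf U',t+\Delta)}\,p_{\mathbf s}(\mathbf U',t+\Delta)\,d\mathbf U',
\end{equation*}
where $p_f$ is the filter density from \eqref{eq:filter} and $p_{t,t+\Delta}$ is the short-time transition kernel of the $\mathbf U$-dynamics \eqref{eq:cgns_U}. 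Inserting Gaussian ansätze for $p_f$, $p_{\mathbf s}$ and $p_{t,t+\Delta}$, doing the Gaussian integral, and matching the resulting linear and quadratic forms in $\mathbf U$ produces short-time updates for $\boldsymbol\mu_{\mathbf s}$ and $\mathbf R_{\mathbf s}$. Taking $\Delta\to 0$ and using $\d\boldsymbol\mu/\d t$ and $\d\mathbf R/\d t$ from \eqref{eq:filter_mu}--\eqref{eq:filter_R} to eliminate filter time-derivatives yields exactly \eqref{Smoother_Main_mu}--\eqref{Smoother_Main_R}, with the $\overleftarrow{\d\cdot}/\d t$ sign convention arising from the backward orientation.

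The main obstacle will be the bookkeeping in the small-$\Delta$ expansion. The correction terms $(\boldsymbol\Sigma_{\mathbf U}\boldsymbol\Sigma_{\mathbf U}^*)\mathbf R^{-1}$ in both the drift and the Lyapunov-type equation arise from combining the inverse of the one-step transition covariance $\Delta\,\boldsymbol\Sigma_{\mathbf U}\boldsymbol\Sigma_{\mathbf U}^*$ with the inverse of the filter covariance; one must verify that all $\mathbf A$- and $\mathbf X$-dependent contributions cancel after substituting \eqref{eq:filter}, so that the smoother equations depend on the data only implicitly through $\boldsymbol\mu$ and $\mathbf R$. A cleaner alternative, which I would use as a cross-check, is the two-filter formulation: run a backward information filter on \eqref{eq:cgns_U} with the reversed-time observations, then fuse it with the forward filter via Gaussian multiplication of densities. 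Both routes lead to the stated system and ultimately rely on the proof in \cite{liptser2013statistics, chen2018conditional}, which can be cited directly for the technical justification of the formal expansions above.
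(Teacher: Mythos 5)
The paper does not actually prove this proposition; its ``proof'' is a one-line citation to \cite{chen2020learning} (and, for the filter, to \cite{liptser2013statistics, chen2018conditional}). Your proposal therefore goes well beyond what the paper does: you sketch an actual derivation via the continuous-time Rauch--Tung--Striebel factorisation, with the two-filter formulation as a cross-check, and this is indeed the standard route by which the cited works obtain \eqref{Smoother_Main}; carrying out your small-$\Delta$ expansion does reproduce \eqref{Smoother_Main_mu}--\eqref{Smoother_Main_R} under the stated sign convention (and would in fact expose that the factor $(\boldsymbol{\Sigma}_\mathbf{U}\boldsymbol{\Sigma}_\mathbf{U}^*)\mathbf{R}$ in the second term of \eqref{Smoother_Main_R} should read $(\boldsymbol{\Sigma}_\mathbf{U}\boldsymbol{\Sigma}_\mathbf{U}^*)\mathbf{R}^{-1}$, evidently a typo in the paper). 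Two points deserve tightening. First, your opening claim that fixing the path $\mathbf{X}(s)$ ``reduces the pair to a linear Gaussian state-space model'' is only a heuristic: $\mathbf{X}$ is not exogenous but is itself driven by $\mathbf{U}$, so one cannot simply freeze $\mathbf{A}(\mathbf{X},t)$ and condition; the conditional Gaussianity of $p(\mathbf{U}(t)\,|\,\mathbf{X}(s))$ is precisely the nontrivial content of the Liptser--Shiryaev theory of conditionally Gaussian processes, which you correctly invoke but should not present as an immediate consequence of deterministic coefficients. Second, in your RTS identity the denominator should be the one-step \emph{predictive} density $p(\mathbf{U}'_{t+\Delta}\,|\,\mathbf{X}(s\leq t))$ rather than the filter density $p_f(\mathbf{U}',t+\Delta)$; with the filter density in the denominator the innovation (data-dependent) increments over $[t,t+\Delta]$ do not cancel, and the limit would pick up spurious $\mathbf{A}$-dependent terms. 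With that correction the cancellation you flag as the ``main obstacle'' goes through, and the smoother equations depend on the observations only through $\boldsymbol\mu$ and $\mathbf{R}$, as stated.
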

\begin{proof}
The proof can be found in
 \cite{chen2020learning}.
\end{proof}

The smoother estimate \eqref{Smoother_Main} provides a PDF at each time instant for the recovered velocity field, which includes the uncertainty. Given these PDFs and the temporal dependence, an efficient sampling algorithm of the time series of the velocity field $\mathbf{U}$ from the posterior distributions can be developed. The sampled time series of the velocity field will be used to forecast the possible range of the Lagrangian trajectories $\mathbf{x}(t)$ in computing the Lagrangian descriptor in Section \ref{Subsec:LD}.

\begin{proposition}[Sampling trajectories from posterior distributions]\label{Prop:Sampling}
Based on the smoother estimate, an optimal backward sampling of the trajectories associated with the unobserved variable $\mathbf{U}$ satisfies the following explicit formula,
\begin{equation}\label{Sampling_Main}
  \frac{\overleftarrow{\d \mathbf{U}}}{\d t} = \frac{\overleftarrow{\d \boldsymbol\mu_\mathbf{s}}}{\d t} - \big(\boldsymbol\Lambda + (\boldsymbol{\Sigma}_\mathbf{U}\boldsymbol{\Sigma}_\mathbf{U}^*)\mathbf{R}^{-1}\big)(\mathbf{U} - \boldsymbol\mu_\mathbf{s}) + \boldsymbol{\Sigma}_\mathbf{U}\dot{\mathbf{W}}_{\mathbf{U}}(t).
\end{equation}
\end{proposition}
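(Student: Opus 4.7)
The plan is to verify directly that the process $\mathbf{U}(t)$ generated by the backward SDE \eqref{Sampling_Main}, initialized at $t=T$ with $\mathbf{U}(T)\sim\mathcal{N}(\boldsymbol\mu_\mathbf{s}(T),\mathbf{R}_\mathbf{s}(T))$, has the smoother law $p(\mathbf{U}(t)\mid\mathbf{X}(s),s\in[0,T])$ of Proposition~\ref{Prop:Smoothing}. Since \eqref{Sampling_Main} is linear in $\mathbf{U}$ with additive Gaussian noise and the terminal condition is Gaussian, the conditional distribution of the sampled trajectory is Gaussian at every $t$. It therefore suffices to match the conditional mean and covariance with $(\boldsymbol\mu_\mathbf{s}(t),\mathbf{R}_\mathbf{s}(t))$, and then invoke the Markov structure of \eqref{Sampling_Main} to pin down the full path law.

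For the mean, I would take the conditional expectation of both sides of \eqref{Sampling_Main} given the observation path $\mathbf{X}$. Setting $\mathbf{m}(t):=\mathbb{E}[\mathbf{U}(t)\mid\mathbf{X}]$, the noise term drops out and the deviation $\mathbf{e}(t):=\mathbf{m}(t)-\boldsymbol\mu_\mathbf{s}(t)$ satisfies the homogeneous linear backward ODE
\begin{equation*}
\frac{\overleftarrow{\d \mathbf{e}}}{\d t}=-\bigl(\boldsymbol\Lambda+(\boldsymbol{\Sigma}_\mathbf{U}\boldsymbol{\Sigma}_\mathbf{U}^*)\mathbf{R}^{-1}\bigr)\mathbf{e},
\end{equation*}
with terminal datum $\mathbf{e}(T)=0$, hence $\mathbf{m}(t)\equiv\boldsymbol\mu_\mathbf{s}(t)$. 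For the covariance, I would apply the reverse-time Itô formula to the outer product $(\mathbf{U}-\boldsymbol\mu_\mathbf{s})(\mathbf{U}-\boldsymbol\mu_\mathbf{s})^*$ and take conditional expectation. Letting $\mathbf{P}(t):=\mathrm{Cov}(\mathbf{U}(t)\mid\mathbf{X})$, the linear drift produces the two Lyapunov-type terms involving $\boldsymbol\Lambda+(\boldsymbol{\Sigma}_\mathbf{U}\boldsymbol{\Sigma}_\mathbf{U}^*)\mathbf{R}^{-1}$ and the Itô correction in reversed time supplies $+\boldsymbol{\Sigma}_\mathbf{U}\boldsymbol{\Sigma}_\mathbf{U}^*$. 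Collecting terms reproduces \eqref{Smoother_Main_R} verbatim; together with the terminal value $\mathbf{P}(T)=\mathbf{R}(T)=\mathbf{R}_\mathbf{s}(T)$, uniqueness for the backward Lyapunov equation forces $\mathbf{P}(t)\equiv\mathbf{R}_\mathbf{s}(t)$.

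The main obstacle I anticipate is making the reverse-time stochastic calculus rigorous. One has to fix a convention in which $\overleftarrow{\d}/\d t$ is interpreted as an ordinary forward derivative in the reversed variable $\tau=T-t$, so that $\dot{\mathbf{W}}_\mathbf{U}$ is white noise with respect to the reversed filtration and is independent of $\mathbf{U}(s)$ for $s\ge t$. This is exactly what makes the Itô correction enter with the correct positive sign in the covariance equation; a naive reading of \eqref{Sampling_Main} as a forward SDE in $t$ would flip that sign. Once the adaptedness and sign conventions are settled, the matching of both one-time conditional moments—together with Gaussianity and the Markovian nature of \eqref{Sampling_Main} under the reversed filtration—pins down the finite-dimensional distributions of the sampled path to coincide with the smoother, completing the verification.
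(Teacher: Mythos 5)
The paper itself does not prove this proposition---it defers entirely to \cite{chen2020learning}---so your attempt is necessarily an independent route. The portion that verifies the one-time conditional moments is correct and well executed: with the reversed-time convention you spell out, the mean deviation $\mathbf{e}=\mathbf{m}-\boldsymbol\mu_\mathbf{s}$ satisfies the homogeneous linear backward equation with $\mathbf{e}(T)=0$, and the reverse-time It\^o computation for $\mathrm{Cov}(\mathbf{U}(t)\mid\mathbf{X})$ does reproduce \eqref{Smoother_Main_R} (your Lyapunov form, with the second term being the adjoint of $\boldsymbol\Lambda+(\boldsymbol{\Sigma}_\mathbf{U}\boldsymbol{\Sigma}_\mathbf{U}^*)\mathbf{R}^{-1}$, is the correct reading of that equation). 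Your attention to the adaptedness and sign conventions of the reversed noise is also the right instinct.

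The gap is in your final step. Matching the one-time marginals, together with Gaussianity and Markovianity of the \emph{sampler}, determines the finite-dimensional distributions of the sampler---but it does not show that these coincide with the finite-dimensional distributions of the true conditional path law $p(\mathbf{U}(\cdot)\mid\mathbf{X})$, which is what ``optimal sampling of the trajectories'' means here; the surrounding text stresses that the temporal dependence (the two-time covariances, i.e., the memory effect) is precisely what distinguishes \eqref{Sampling_Main} from drawing independent samples out of the smoother marginals. Two backward-Markov Gaussian processes can share every one-time marginal and still have different joint laws, so nothing in your argument rules out a mismatch in the transition structure. The missing ingredient is the identification of the backward conditional transition kernel: by the conditional Markov property of \eqref{eq:cgns}, $p(\mathbf{u}_t\mid\mathbf{u}_{t+\Delta t},\mathbf{X}(s\le T))=p(\mathbf{u}_t\mid\mathbf{u}_{t+\Delta t},\mathbf{X}(s\le t))\propto p(\mathbf{u}_{t+\Delta t}\mid\mathbf{u}_t)\,p(\mathbf{u}_t\mid\mathbf{X}(s\le t))$, and expanding the two Gaussian factors to first order in $\Delta t$ yields exactly the drift $-\mathbf{F}_\mathbf{U}-\boldsymbol\Lambda\mathbf{U}+(\boldsymbol{\Sigma}_\mathbf{U}\boldsymbol{\Sigma}_\mathbf{U}^*)\mathbf{R}^{-1}(\boldsymbol\mu-\mathbf{U})$ (equivalently the form in \eqref{Sampling_Main} after substituting \eqref{Smoother_Main_mu}) and diffusion $\boldsymbol{\Sigma}_\mathbf{U}$. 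This Bayes/time-reversal step---essentially forward-filtering backward-sampling, and the reason the \emph{filter} quantities $\boldsymbol\mu,\mathbf{R}$ rather than only the smoother quantities enter \eqref{Sampling_Main}---is what actually establishes the path-law claim; your moment matching then serves as a consistency check rather than the proof.
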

\begin{proof}
The proof can be found in
 \cite{chen2020learning}.
\end{proof}
The temporal dependence in the sampled time series of $\mathbf{U}$ is extremely important. It contains the memory effect of the recovered velocity field, which is a crucial dynamical feature that affects the prediction of the Lagrangian trajectories $\mathbf{X}(t)$. The sampling approach in \eqref{Sampling_Main} fundamentally differs from drawing independent samples at different time instants, giving a noisy time series that lacks the physical properties of $\mathbf{U}$. The sampled trajectories will be used in the Lagrangian descriptor, which provides phase space information to guide optimal placement of additional drifters. The details of Lagrangian descriptor are presented in Section \ref{Subsec:LD}. Note that only the diagonal entries of the matrix $\mathbf{R}$ in \eqref{eq:filter} are saved and applied to computing the smoother and sampling solutions \eqref{Smoother_Main}--\eqref{Sampling_Main}. This significantly reduces computational storage and introduces little error. When the flow field is incompressible, as in the numerical experiment in Section \ref{Sec:Numerics}, it has been shown that $\mathbf{R}$ will converge to a diagonal matrix when $L$ increases \cite{chen2014information}.

In the following, all the numerical experiments for data assimilation will be based on the smoothing solution \eqref{Smoother_Main}. The filtering solution \eqref{eq:filter} will be used in the mathematical analysis of simple examples justifying the proposed strategy of placing drifter observations in Section \ref{Sec:Strategies}. These mathematical results are qualitatively consistent with the numerical outcomes based on the smoothing estimates, but the filtering solutions are much easier to handle mathematically.

\subsection{Spatial reconstruction}
The Lagrangian data assimilation framework \eqref{eq:cgns} focuses on the recovery of each Fourier coefficient. What remains is the reconstruction of the flow field in physical space from the recovered Fourier coefficients. Denote by $\overline{\hat{u}}_\mathbf{k}$ the mean and $\mbox{var}(\hat{u}_\mathbf{k})$ the variance of mode $\mathbf{k}$ from the data assimilation \eqref{Smoother_Main}. Note that the entire posterior covariance $\mathbf{R}_{\mathbf{s}}$ is, in general, not a non-diagonal matrix due to the mixing of the modes in the observation process. Nevertheless, the diagonal components of $\mathbf{R}_{\mathbf{s}}$ usually have more significant amplitudes than the off-diagonal ones, especially when the estimation of $\mathbf{U}$ becomes more accurate \cite{chen2014information}. Therefore, taking the diagonal entries, which represent the actual uncertainty of each mode, to reconstruct the variance at a grid point in physical space is a natural and reasonable choice. The following argument utilizes the mean-fluctuation decomposition of each Gaussian random variable $\hat{u}_\mathbf{k} = \overline{\hat{u}}_\mathbf{k} + \hat{u}_\mathbf{k}^\prime$, where $\overline{\hat{u}}_\mathbf{k}$ is the mean and $\hat{u}_\mathbf{k}^\prime$ is the fluctuation with $\mbox{var}(\hat{u}_\mathbf{k}^\prime) = \mbox{var}(\hat{u}_\mathbf{k})$.

The mean at each grid point is given by
\begin{equation}\label{Ocean_Velocity_mean}
  \overline{u}(\mathbf{x},t) = \sum_{\mathbf{k}\in\mathcal{K}}\overline{\hat{u}}_{\mathbf{k}}(t)e^{i\mathbf{k}\mathbf{x}}\mathbf{r}_{\mathbf{k},1} \qquad \overline{v}(\mathbf{x},t) = \sum_{\mathbf{k}\in\mathcal{K}}\overline{\hat{u}}_{\mathbf{k}}(t)e^{i\mathbf{k}\mathbf{x}}\mathbf{r}_{\mathbf{k},2},
\end{equation}
where $\mathbf{r}_{\mathbf{k},1}$ and $\mathbf{r}_{\mathbf{k},2}$ are the two component of the eigenvector $\mathbf{r}_{\mathbf{k}}$. Similarly, the fluctuation in physical space is given by
\begin{equation}\label{Ocean_Velocity_fluctuation}
  {u}^\prime(\mathbf{x},t) = \sum_{\mathbf{k}\in\mathcal{K}}{\hat{u}}^\prime_{\mathbf{k}}(t)e^{i\mathbf{k}\mathbf{x}}\mathbf{r}_{\mathbf{k},1} \qquad {v}^\prime(\mathbf{x},t) = \sum_{\mathbf{k}\in\mathcal{K}}{\hat{u}}^\prime_{\mathbf{k}}(t)e^{i\mathbf{k}\mathbf{x}}\mathbf{r}_{\mathbf{k},2}.
\end{equation}
Due to the negligible off-diagonal components in the covariance matrix, the variance at a fixed location $\mathbf{x}$ and time $t$ is given by
\begin{equation}\label{Ocean_Velocity_fluctuation}
  \mbox{var}({u}(\mathbf{x},t)) = \sum_{\mathbf{k}\in\mathcal{K}}\mbox{var}(\hat{u}_{\mathbf{k}}(t))(\mathbf{r}_{\mathbf{k},1}\mathbf{r}^*_{\mathbf{k},1}) \qquad \mbox{var}({v}(\mathbf{x},t)) = \sum_{\mathbf{k}\in\mathcal{K}}\mbox{var}(\hat{u}_{\mathbf{k}}(t))(\mathbf{r}_{\mathbf{k},2}\mathbf{r}^*_{\mathbf{k},2}).
\end{equation}
Note that the variances at different grid points are the same. This is a unique feature when global basis functions such as the Fourier bases are used. When ensemble data assimilation methods are applied, the so-called localization technique for recovering the state variables in physical space becomes essential to eliminate the spurious correlations due to the sampling error \cite{petrie2008localization, houtekamer2005ensemble}. In such a case, the spatial distribution of the variance is usually not uniform. Such an inhomogeneous variance distribution is entirely due to numerical approximations. The data assimilation framework developed here avoids such an issue, and the resulting variance represents the exact uncertainty from the Bayesian inference. It is worthwhile to mention that even though the variances at different grid points are the same when the Lagrangian data assimilation is adopted, the change in the Lagrangian descriptor described in Section \ref{Subsec:LD} as a result of such uncertainties is inhomogeneous in space. The change in the Lagrangian descriptor relies on the dynamical properties, and the uncertainty affects the Lagrangian descriptor in a highly nonlinear way.

\section{An Information Metric for Quantifying the Information Gain Using Drifter Observations}\label{Sec:Info}
When uncertainty is considered in recovering the underlying flow field, it is essential to use the entire recovered distribution as the quantity for assessing the state estimation skill. Therefore, information metrics, which directly compare the statistics, are more appropriate choices than the standard path-wise measurements. The latter usually exploits only the posterior mean time series and ignores the information that characterizes the uncertainty.

As the drifter observations provide additional information, the posterior distribution of the inferred flow field via data assimilation is expected to contain less uncertainty than the prior distribution based solely on the model information. For turbulent flows, the prior distribution is typically given by the model equilibrium distribution. The difference between these two distributions is named the uncertainty reduction or the information gain. Denote by $p_t(\mathbf{u})$ the posterior distribution from data assimilation at time $t$ and $p_{eq}(\mathbf{u})$ the prior distribution describing the model equilibrium. The information gain in $p_t(\mathbf{u})$ compared with $p_{eq}(\mathbf{u})$ is given by the relative entropy $\mathcal{P}(p_t(\mathbf{u}),p_{eq}(\mathbf{u}))$  \cite{majda2010quantifying, majda2005information, kleeman2011information},
\begin{equation}\label{Relative_Entropy}
  \mathcal{P}(p_t(\mathbf{u}),p_{eq}(\mathbf{u})) = \int_{\mathbf{u}} p_t(\mathbf{u})\log\left(\frac{p_t(\mathbf{u})}{p_{eq}(\mathbf{u})}\right)\d\mathbf{u},
\end{equation}
which is also known as Kullback-Leibler divergence or information divergence \cite{kullback1951information, kullback1987letter, kullback1959statistics}.
Despite the lack of symmetry, the relative entropy has two attractive features. First, $\mathcal{P}(p_t,p_{eq}) \geq 0$ with equality if and only if $p_t=p_{eq}$. Second, $\mathcal{P}(p_t,p_{eq})$ is invariant under general nonlinear changes of variables. These features provide an attractive framework for assessing the information gain when the drifters are placed differently. A larger value of the relative entropy in \eqref{Relative_Entropy} means the drifter observations play a more significant role in data assimilation. As a remark, the information theory can also be utilized to quantify model error, model sensitivity, and prediction skill \cite{majda2010quantifying, majda2011link, majda2012lessons, branicki2012quantifying, branicki2014quantifying, kleeman2011information, kleeman2002measuring, delsole2004predictability, branicki2013non, branstator2010two}.

One practical setup for utilizing the framework of information theory in many applications arises when both the distributions are Gaussian so that $p_t\sim\mathcal{N}(\bar{\mathbf{u}}_t, \mathbf{R}_t)$ and $p_{eq}\sim\mathcal{N}(\bar{\mathbf{u}}_{eq}, \mathbf{R}_{eq})$, which is the case in the setup of this work. In the Gaussian framework, $\mathcal{P}(p_t(\mathbf{u}),p_{eq}(\mathbf{u}))$ has the following explicit formula \cite{majda2010quantifying, majda2006nonlinear}
\begin{equation}\label{Signal_Dispersion}
\begin{gathered}
  \mathcal{P}(p_t(\mathbf{u}),p_{eq}(\mathbf{u})) = \left[\frac{1}{2}(\bar{\mathbf{u}}_t-\bar{\mathbf{u}}_{eq})^*(\mathbf{R}_{eq})^{-1}(\bar{\mathbf{u}}_t-\bar{\mathbf{u}}_{eq})\right] + \\ \qquad\qquad\qquad \left[-\frac{1}{2}\log\det(\mathbf{R}_t\mathbf{R}_{eq}^{-1}) + \frac{1}{2}(\mbox{tr}(\mathbf{R}_t\mathbf{R}_{eq}^{-1})-\mbox{Dim}(\mathbf{u}))\right],
\end{gathered}
\end{equation}
where $\mbox{Dim}(\mathbf{u})$ is the dimension of $\mathbf{u}$.
The first term in brackets in \eqref{Signal_Dispersion} is called `signal', reflecting the model error in the mean but weighted by the inverse of the model variance, $\mathbf{R}_{eq}$, whereas the second term in brackets, called `dispersion', involves only the model error covariance ratio, $\mathbf{R}_t\mathbf{R}^{-1}_{eq}$. The signal and dispersion terms in (9) are individually invariant under any (linear) change of variables which maps Gaussian distributions to Gaussians.

Figure \ref{fig: Info_gain_schematic} provides a detailed illustration of utilizing the relative entropy to assess the information gain due to the additional knowledge from observations. The true time series is blue in Panels (a)--(c). The red curve in each panel shows the recovered posterior mean time series, while the red shading area shows the two standard deviations associated with the posterior distribution. From Panel (a) to Panel (c), the number of observations increases. Therefore, the posterior mean becomes more accurate, and the posterior variance/standard deviation shrinks. Panels (d)--(e) compare the prior distribution (black), which is the equilibrium distribution of the model, and the posterior distribution (red) at time $t=6.666$. The information gain (IG) is listed at the left top corner, expressed as signal plus dispersion. Both signal and dispersion increase as the distribution moves towards the truth, with the variance decreasing. It is worthwhile highlighting two important facts. First, the relative entropy, by definition, assesses the difference between the posterior and prior distribution rather than directly quantifying the error in the posterior distribution related to the true value. Therefore, a large information gain does not necessarily imply that the posterior distribution is close to the truth. This can be seen by comparing the distributions in green and red in Panel (e), which have the same variance, but the means have different signs. The information gains in these two distributions are the same, but the truth is closer to the center of the distribution in red. Nevertheless, in many applications, especially for Lagrangian data assimilation, when the observational operator is accurate, the posterior mean will converge to the truth as the number of observations increases, and the posterior covariance will shrink to zero \cite{chen2014information}. Therefore, an increase in the information gain often means a nearly simultaneous reduction of path-wise error. Second, the uncertainty reduction reflects not only in the variance but also in the mean. Even if the two distributions have the same variance, if the posterior mean is far from the prior mean, then the two posterior distributions differ from each other. In other words, the posterior distribution contains additional information beyond the prior one. This can be seen by the cyan and red distributions in Panel (b). These two distributions have the same variance, but the cyan one has the same mean as the prior distribution while the red one has a mean value that differs from the prior. Thus, the signal part of the information gain using the posterior distribution in red is nonzero due to the difference in the mean values.

\begin{figure}[htb]\centering
  \hspace*{-2cm}\includegraphics[width=17cm]{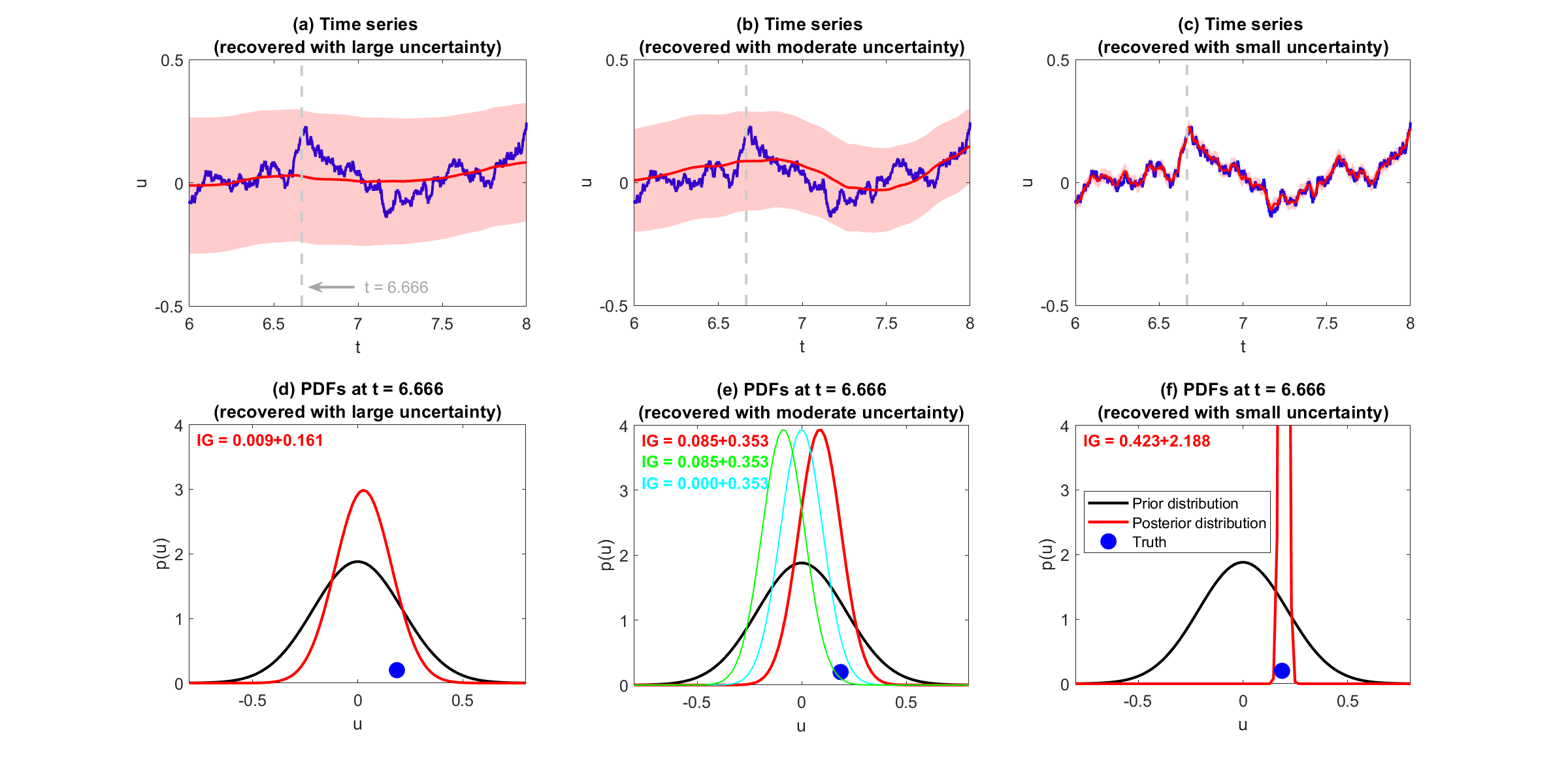}
  \caption{Schematic illustration of using the information measurement, namely the relative entropy \eqref{Relative_Entropy}, for assessing the information gain (IG). Panels (a)--(c): The true time series (blue), the recovered posterior mean time series (red), and the two standard deviations associated with the posterior distribution (red shading). Panel (d)--(f): The prior distribution (black), which is the equilibrium distribution of the model, the posterior distribution (red), and the true value (blue dot) at time $t=6.666$. The information gain (IG) is listed at the left top corner, expressed as signal plus dispersion. The distributions in green and cyan shown in Panel (f) have the same variance as the one in red. The mean of the green one has the negative value as the red while the mean of the cyan is zero.  }\label{fig: Info_gain_schematic}
\end{figure}

For completeness, the root-mean-square error (RMSE) will also be illustrated in some of the following discussions. The RMSE is one of the most commonly used path-wise measurements for time series. It is defined by the averaged error between a certain deterministic estimation and the truth in the following way:
\begin{equation}\label{RMSE_Definition}
    \mbox{RMSE} = \sqrt{\frac{\sum_{i=1}^n(u^{truth}_i-u^{est}_i)^2}{n}}.
  \end{equation}
where $u_i^{truth}$ is the (unknown) true signal evaluated at time $t_i$ with $i=1,\ldots, n$ and $u^{est}_i$ a path-wise estimation. The posterior mean time series is naturally used as $u^{est}_i$ in many applications for evaluating the skill of data assimilation. Note that the application of the RMSE is not limited to time series. It can also be adopted to evaluate the averaged error between two spatiotemporal fields. Such a path-wise measurement is easy to compute and widely used in practice. However, the variance and higher-order moments in the posterior distribution are not used in calculating the RMSE and other path-wise measurements. Therefore, the crucial information that contains the uncertainty is completely missing in these path-wise measurements. This further indicates the necessity of using the information measurement to quantify the uncertainty \eqref{Relative_Entropy}.

\section{The Mathematical Strategy of Placing Drifter Observations}\label{Sec:Strategies}
\subsection{Overview}
The mathematical strategy of placing drifter observations contains two criteria.
First, the drifters are deployed at locations where they can travel long distances within the given time window to collect more information about the flow field. These locations usually have strong velocities that lead to a large signal-to-noise ratio according to the right-hand side of \eqref{Tracer_eqn}. It means the state variables have strong observability, which allows them to be accurately identified. This argument still holds when the velocity is estimated with uncertainty, in which case the drifters are launched at locations with potentially strong velocities from the data assimilation inference. See Section \ref{Subsec:LD}.
Second, it is desirable to place the drifters at locations that are separate from each other. This will prevent the drifters from carrying out similar information if their trajectories nearly overlap. If the drifter trajectories cover the entire domain, then these drifters can collect both local and global information.
Below, simple illustrative examples will be used to support the above arguments. Due to the simple and low-dimensional structure of the flows, mathematical analyses are available to justify both criteria.

\subsection{Mathematical justifications of the two criteria}
This subsection aims to provide mathematical justifications of the two criteria developed above exploiting simple illustrative examples.
\subsubsection{Justification of the first criterion}\label{Subsubsec: criterion1}
Consider a simple flow field with only one mode for both $u$ and $v$,
\begin{equation}\label{flow_one_mode}
   u(x,y,t) = \hat{u}(t)\sin(y) \qquad\mbox{and} \qquad
   v(x,y,t) = \hat{v}(t)\sin(x),
\end{equation}
the associated stream function of which is given by
\begin{equation}
\psi(x,y,t) = \hat{u}(t)\cos(y) - \hat{v}(t) \cos(x).
\end{equation}
The flow field is fully determined once $\hat{u}(t)$ and $\hat{v}(t)$ are given. The true time series of these two variables are driven by two independent real-valued OU processes:
\begin{subequations}\label{flow_one_mode_SDE}
\begin{align}
  \frac{\d\hat{u}}{\d t} &= -d_u \hat{u} + f_u + \sigma_u\dot{W}_u,\label{flow_one_mode_SDE_1}\\
  \frac{\d\hat{v}}{\d t} &= -d_v \hat{v} + f_v + \sigma_v\dot{W}_v,\label{flow_one_mode_SDE_2}
\end{align}
\end{subequations}
and they are given by one random realization from these equations.
Uncertainty appears in data assimilation because of the stochasticity in modeling these two Fourier coefficients.
Now assume only one drifter is utilized, the governing equations of which are
\begin{subequations}\label{flow_one_mode_tracer}
\begin{align}
  \frac{\d x}{\d t} &= \hat{u}(t)\sin(y) + \sigma_x\dot{W}_x,\label{flow_one_mode_tracer_1}\\
  \frac{\d y}{\d t} &= \hat{v}(t)\sin(x) + \sigma_y\dot{W}_y.\label{flow_one_mode_tracer_2}
\end{align}
\end{subequations}
The equations in \eqref{flow_one_mode_tracer} and \eqref{flow_one_mode}--\eqref{flow_one_mode_SDE} correspond to \eqref{Tracer_eqn} (or \eqref{eq:cgns_X}) and \eqref{Ocean_Velocity}--\eqref{OU_process} (or \eqref{eq:cgns_U}), respectively, in the Lagrangian data assimilation framework. For this simple case, the filter solution \eqref{eq:filter} can be written down explicitly. For simplicity, let us consider the equations \eqref{flow_one_mode_SDE_1} and \eqref{flow_one_mode_tracer_1} that describe the motion of $u$. The filter formulae \eqref{eq:filter} for recovering $\hat{u}$ lead to
\begin{subequations}\label{flow_one_mode_filter}
\begin{align}
  \frac{\d \mu}{\d t} &= f_u - d_u \mu + \sigma_x^{-2} r \sin(y) (\sin(y) \hat{u}_{truth} - \sin(y)\mu + \sigma_x\varepsilon_1),\label{flow_one_mode_filter_1}\\
  \frac{\d r}{\d t} &= -2d_ur + \frac{1}{2}\sigma_u^2-\sigma_x^{-2}r^2\sin(y)^2.\label{flow_one_mode_filter_2}
\end{align}
\end{subequations}
where $\mu$ and $r$ are the filter posterior mean and variance of $\hat{u}$, respectively, and $\hat{u}_{truth}$ is the truth of $\hat{u}$. Note that $\d x/\d t$ has been rewritten as $\sin(y) \hat{u}_{truth} + \sigma_x\varepsilon_1$ in \eqref{flow_one_mode_filter_1}, where $\varepsilon_1$ is a standard Gaussian random number at each time instant. Because the last term on the right-hand side of \eqref{flow_one_mode_filter_2} contains $\sin(y)$, which changes in time, the posterior variance $r$ will not converge to a constant, which differs significantly from the standard Kalman (or Kalman-Bucy) filter and is a unique feature of nonlinear data assimilation. Nevertheless, for the illustration here, it is sufficient to study the quantitative behavior of $r$ as a function of $y$. Let $y$ be a constant and consider the solution of
\begin{equation}
0 = -2d_ur + \frac{1}{2}\sigma_u^2-\sigma_x^{-2}r^2\sin(y)^2.
\end{equation}
This is a standard quadratic equation in algebra, which has the solution
\begin{equation}\label{flow_one_mode_r_equilibrium}
r = \frac{\sigma_u^2}{2d_u+\sqrt{4d_u^2+2\sigma_x^{-2}\sin(y)^2\sigma_u^2}},
\end{equation}
where the negative root is discarded since the variance is positive.
As all the other parameters are fixed, $r$ is smaller when $\sin(y)^2$ is larger. Note that a larger $\sin(y)^2$ means a stronger flow velocity along the $x$ direction from \eqref{flow_one_mode}. It also implies that the signal-to-noise ratio in \eqref{flow_one_mode_tracer_1} is more significant, and the system has stronger practical observability. This concludes that the posterior variance will be minimized if the drifter is placed at locations with strong velocities.

Likewise, by re-organizing the right hand side of \eqref{flow_one_mode_filter_1} yields,
\begin{equation}
\frac{\d \mu}{\d t} =  - (d_u + \sigma_x^{-2} r \sin(y)^2  ) \mu + f_u + \sigma_x^{-2} r \sin(y)^2 \hat{u}_{truth}  + \sigma_x^{-2} r \sin(y)\sigma_x\varepsilon_1.
\end{equation}
Again, assume $y$ is a constant. Then the long-term expected value of $\mu$ is
\begin{equation}\label{flow_one_mode_expected_mean}
  \mathbb{E}(\mu) = \frac{ f_u + \sigma_x^{-2} r \sin(y)^2 \hat{u}_{truth}}{d_u + \sigma_x^{-2} r \sin(y)^2}.
\end{equation}
Taking the difference between \eqref{flow_one_mode_expected_mean} and the prior mean $f_u/d_u$ yields
\begin{equation}\label{flow_one_mode_expected_mean_prior}
  \mbox{Diff}_{mean} = \mathbb{E}(\mu)-\frac{f_u}{d_u} = \frac{ f_u + \sigma_x^{-2} r \sin(y)^2 \hat{u}_{truth}}{d_u + \sigma_x^{-2} r \sin(y)^2}-\frac{f_u}{d_u} = \frac{(d_u + f_u \hat{u}_{truth}) \sigma_x^{-2} r\sin(y)^2 }{d_u(d_u + \sigma_x^{-2} r \sin(y)^2)}
\end{equation}
Recall in \eqref{flow_one_mode_r_equilibrium} that $r$ is roughly proportional to the reciprocal of $|\sin(y)|$. Therefore, $\mbox{Diff}_{mean}$ in \eqref{flow_one_mode_expected_mean_prior} can be approximated by
\begin{equation}\label{flow_one_mode_expected_mean_prior2}
  \mbox{Diff}_{mean} \approx \frac{(d_u + f_u \hat{u}_{truth}) \sigma_x^{-2} |\sin(y)| }{d_u(d_u + \sigma_x^{-2} |\sin(y)|)},
\end{equation}
which implies $|\mbox{Diff}_{mean}|$ becomes larger as $|\sin(y)|$. This concludes that the posterior mean differs the most from the prior mean when $|\sin(y)|$ is maximized, corresponding to the case that the flow moves fastest along $x$ direction.

This simple example can also illustrate the criterion affecting the path-wise error. Recall that the governing equation of the truth $\hat{u}_{truth}$ in \eqref{flow_one_mode_SDE_1} is given by
\begin{equation}\label{flow_one_mode_truth}
    \frac{\d\hat{u}_{truth}}{\d t} = -d_u \hat{u}_{truth} + f_u + \sigma_u\varepsilon_2,
\end{equation}
Taking the difference between the governing equations of the posterior mean \eqref{flow_one_mode_filter_1} and the truth $\hat{u}_{truth}$ yields
\begin{equation}\label{flow_one_mode_diff}
    \frac{\d(\hat{u}_{truth}-\mu)}{\d t} = - \big(d_u + \sigma_x^{-2} r \sin(y)^2\big) (\hat{u}_{truth}-\mu)  + \big(\sigma_u\varepsilon_2 - \sigma_x^{-1}r\sin(y) \varepsilon_1\big).
\end{equation}
The equilibrium variance is
\begin{equation}\label{flow_one_mode_equilibrium_var}
  \mbox{var}(\hat{u}_{truth}-\mu) = \frac{\sigma_u^2 + \sigma_x^{-2}r^2\sin(y)^2 }{2\big(d_u + \sigma_x^{-2} r \sin(y)^2\big)}.
\end{equation}
As $r$ is roughly proportional to the reciprocal of $|\sin(y)|$, $\mbox{var}(\hat{u}_{truth}-\mu)$ is scaled as $1/(d_u+c|\sin(y)|)$, where $c$ is a positive constant. As the RMSE in $\mu$ compared with $\hat{u}_{truth}$ is proportional to $\mbox{var}(\hat{u}_{truth}-\mu)$, it is concluded that when the drifter stays at locations with strong velocity the resulting RMSE remains small.

As a numerical validation, consider a set of numerical simulations with $d_u=d_v=0.5$, $f_u=f_v=0$, $\sigma_u=\sigma_v=0.3$, $\sigma_x=\sigma_y=0.003$. Take different values of $y$ at $t=5$ and then generate its trajectory within the interval $t\in[4,6]$ by running the drifter equations forward and backward. Within such a short interval, the drifter trajectory will stay around the locations at $t=5$. These trajectories will be used for Lagrangian data assimilation. Note that the numerical simulation here shows the smoothing state estimation while the analysis above utilized the simpler case, namely the filter estimation, for the sake of deriving the analytic expressions. The other component of the drifter location at $t=5$ is set to be $x=0$, which has little impact on recovering $\hat{u}$. Panels (a)--(d) in Figure \ref{fig: One_mode_example} show the negative RMSE, the information gain, and its signal and dispersion parts. The negative sign is imposed in front of the RMSE such that the color maps of different metrics have similar representations --- larger values mean more skillful. The RMSE error and the information gain reach the optimal points when $|\sin(y)|=1$. Panels (e)--(f) show the posterior mean and posterior variance when the $y$ coordinate of the drifter stays around $y=\pi/2$ ($\sin(y)=1$) and $y=0$ ($\sin(y)=0$), where their trajectories within the interval $t\in[4,6]$ are shown in Panel (f). Deploying the drifter at $y=\pi/2$ leads to significantly much better results in terms of both the reduction of the uncertainty and minimizing the path-wise error.

To summarize, when the drifter is placed and stays at the locations with strong velocity fields, the information gain in both the signal and dispersion parts will be maximized. Meanwhile, the path-wise error in the posterior mean will also be minimized.

\begin{figure}[htb]\centering
  \hspace*{-2cm}\includegraphics[width=17cm]{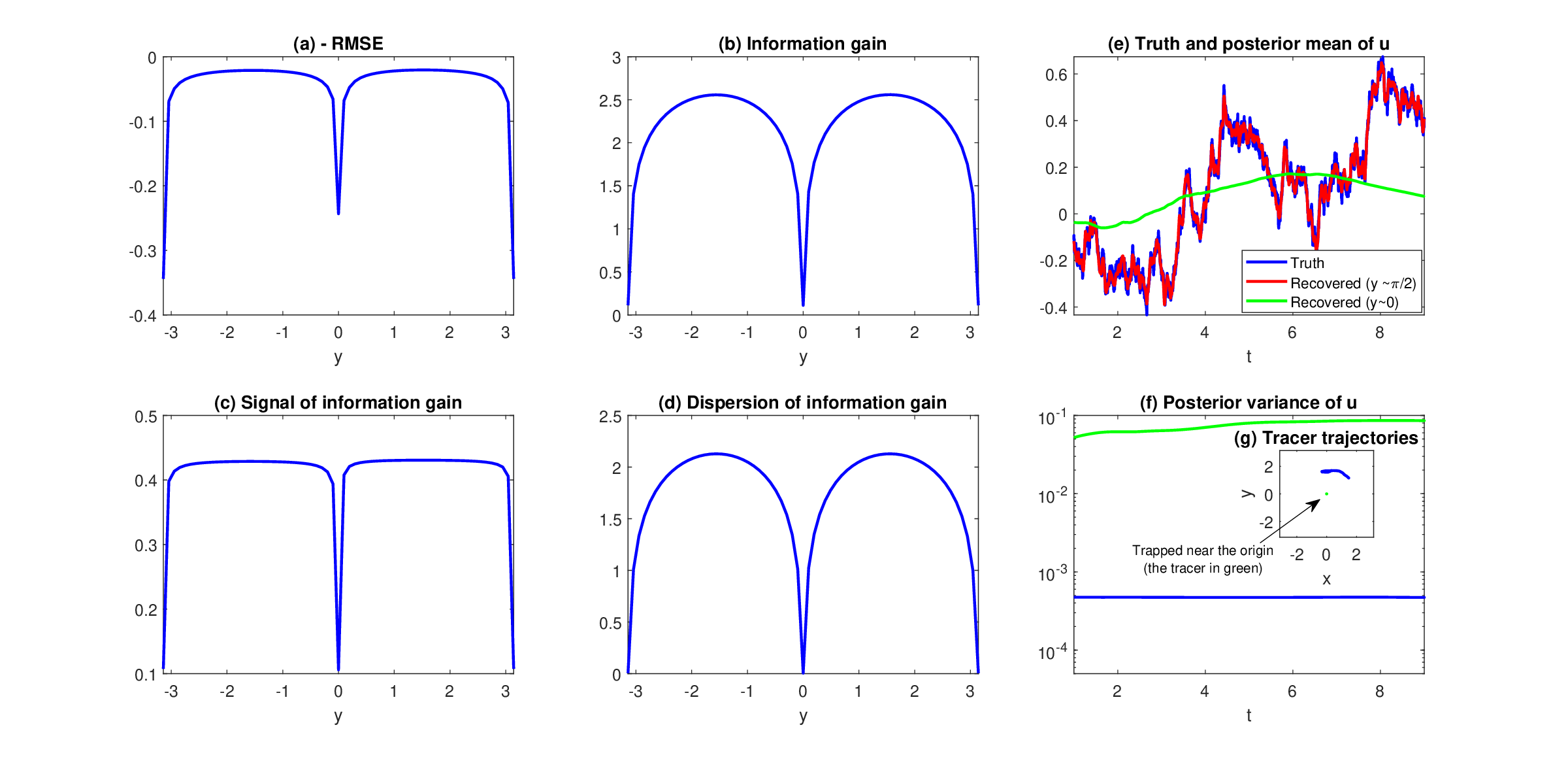}
  \caption{Numerical results of the flow model with one mode in each direction described in Section \ref{Subsubsec: criterion1}. Panel (a): The negative of the RMSE between the truth and the posterior mean of $\hat{u}$. Panel (b): The total information gain via the relative entropy \eqref{Relative_Entropy}. Panel (c): The signal part of the total information gain in the posterior distribution of $\hat{u}$ related to the prior distribution. Panel (d): The dispersion part of the relative entropy. Panel (e): Comparison between the truth and the posterior mean time series. Panel (f): The posterior variance. Note the logarithm scale in the y-axis. Panel (g): the two drifter trajectories within $t\in[4,6]$.   }\label{fig: One_mode_example}
\end{figure}

\subsubsection{Justification of the second criterion}\label{Subsubsec: criterion2}
To justify the second criterion, consider the flow field with two modes along each direction. The stream function $\psi$ and the velocity $(u, v)$ for the underlying flow field are given by
\begin{equation}\label{flow_two_modes}
\begin{aligned}
  \psi(x,y,t) &= \frac{1}{2}\left(\hat{u}(t)e^{iy} + \hat{v}(t)e^{ix} + c.c.\right),\\
   u(x,y,t) &= -\frac{1}{2}i\hat{u}(t)e^{iy} + c.c. = \hat{u}^R(t) \sin(y) + \hat{u}^I(t)\cos(y),\qquad\mbox{and}\\
   v(x,y,t) &= \frac{1}{2}i\hat{v}(t)e^{ix} + c.c. = - \hat{v}^R(t) \sin(x) - \hat{v}^I(t)\cos(x),
\end{aligned}
\end{equation}
where c.c. means the complex conjugate and $\hat{u}=\hat{u}^R+i\hat{u}^I$ and $\hat{v}=\hat{v}^R+i\hat{v}^I$. The two Fourier coefficients $\hat{u}(t)$ and $\hat{v}(t)$ are driven by complex OU processes,
\begin{subequations}\label{flow_two_modes_SDE}
\begin{align}
  \frac{\d\hat{u}}{\d t} &= -d_u \hat{u} + f_u + \sigma_u\dot{W}_u,\label{flow_two_modes_SDE_1}\\
  \frac{\d\hat{v}}{\d t} &= -d_v \hat{v} + f_v + \sigma_v\dot{W}_v,\label{flow_two_modes_SDE_2}
\end{align}
\end{subequations}
where $f_u, f_v, \dot{W}_u$ and $\dot{W}_v$ are complex-valued.
As in Section \ref{Subsubsec: criterion1}, let us focus on the velocity component $u$. To illustrate the importance of separate drifter locations, consider the deployments of two drifters along $x$ direction, which are denoted by $x_1$ and $x_2$,
\begin{equation}\label{flow_two_modes_tracers}
\begin{aligned}
\frac{\d x_1}{\d t} &= \hat{u}^R(t) \sin(y_1) + \hat{u}^I(t)\cos(y_1) + \sigma_x\dot{W}_{x,1},\\
\frac{\d x_2}{\d t} &= \hat{u}^R(t) \sin(y_2) + \hat{u}^I(t)\cos(y_2) + \sigma_x\dot{W}_{x,2}.
\end{aligned}
\end{equation}
Consider the coefficient matrix from the right-hand side of \eqref{flow_two_modes_tracers},
\begin{equation}\label{flow_two_modes_matrix}
  M = \left(
        \begin{array}{cc}
          \sin(y_1) & \cos(y_1) \\
          \sin(y_2) & \cos(y_2) \\
        \end{array}
      \right).
\end{equation}
It is anticipated that the eigenvalues of $M$ should have large values such that the signal-to-noise ratio on the right-hand side of \eqref{flow_two_modes_tracers} becomes significant. Denote by $\lambda_1$ and $\lambda_2$ the two eigenvalues of $M$. Panel (a) and Panel (b) of Figure \ref{fig: Two_modes_example} show $|\lambda_1| + |\lambda_2|$ and $|\lambda_1\cdot\lambda_2|$, respectively, as a function of $y_1$ and $y_2$. A similar profile is obtained when computing the hyperellipse expanded by the two eigenvalues, which is omitted here. All these results indicate that if $y_1$ and $y_2$ are very close, then the summation, the product, and the ellipse area given by the absolute value of the two eigenvalues will be small. Therefore, these are not suitable locations for placing the drifters, as noise will dominate the useful information resulting in weak observability of the system.

To validate such a mathematical justification, consider a set of numerical simulations with $d_u=d_v=0.5$, $f_u=f_v=0$, $\sigma_u=\sigma_v=0.3$, $\sigma_x=\sigma_y=0.003$. Take different values of $y_1$ and $y_2$ at $t=5$ and then generate their trajectories within the interval $t\in[4,6]$ by running the drifter equations forward and backward. These trajectories will be used for Lagrangian data assimilation (smoothing state estimation). The other component of the drifter locations at $t=5$ is set to be $x_1=\pi$ and $x_2=\pi/2$, which again has little impact on recovering $\hat{u}$. Panels (c) and (d) show the negative RMSE and the relative entropy as a function of $y_1$ and $y_2$ at $t=5$. Panels (e)--(f) show the information gain in the signal and dispersion parts. The results in Panels (c)--(e) are consistent with those in Panels (a) and (b). They suggest that the two drifters should be separated enough to avoid duplicating the information. In practice, the number of drifters is often smaller than the degree of freedom of the underlying flow. Therefore, drifters are required to be separated enough to capture as large a spanned subspace as possible. The results here also have another important implication. There are broad areas in the phase space of $(y_1,y_2)$, where the information gain is nearly maximum. This means there is no need to design a delicate strategy for drifter placement. Instead, by guaranteeing that the drifters are distanced from each other, random deployment of drifters at locations with strong underlying flow fields will be a practically simple and useful strategy.

\begin{figure}[htb]\centering
  \hspace*{-2cm}\includegraphics[width=17cm]{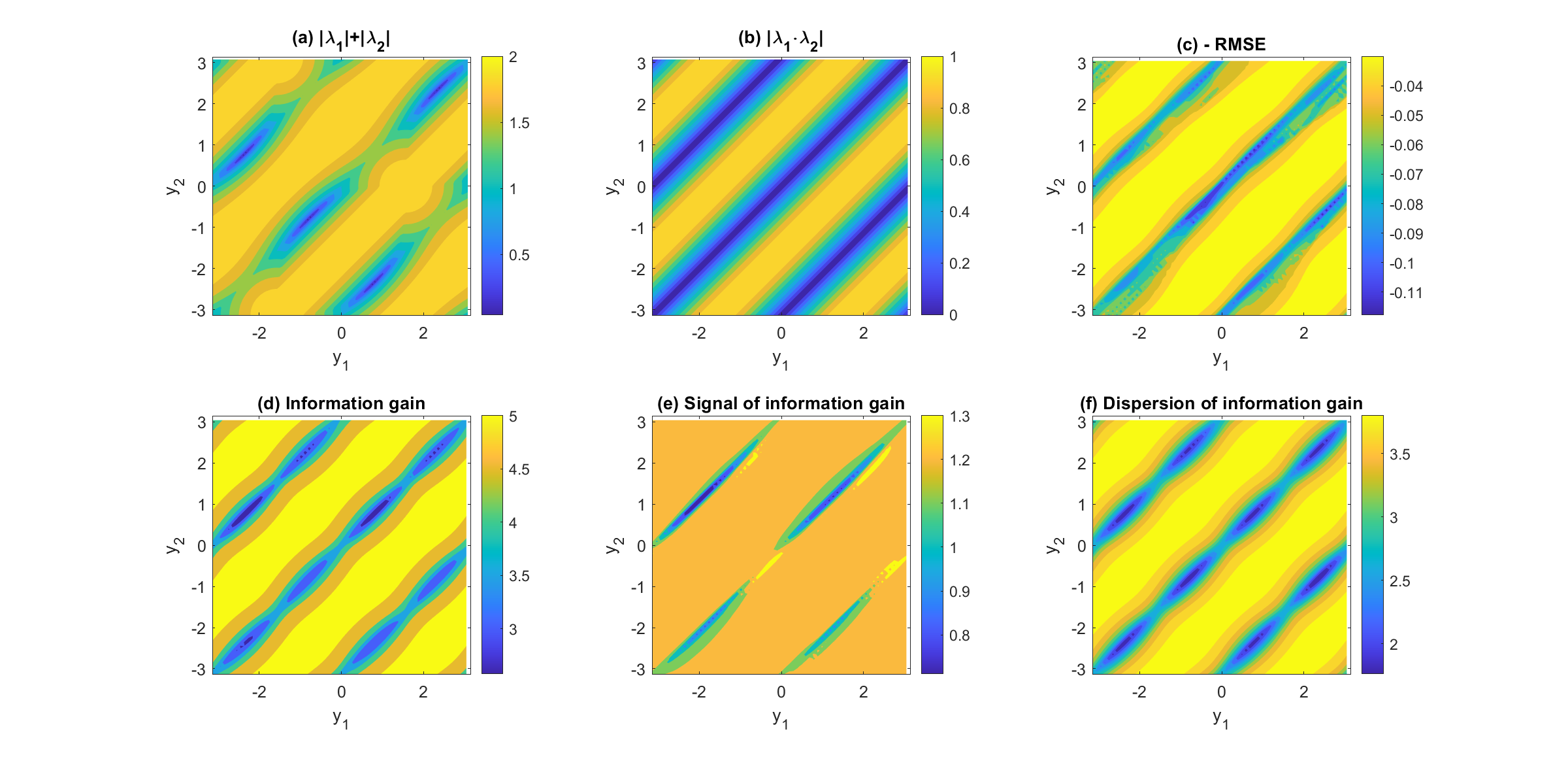}
  \caption{Numerical results of the flow model with two modes in each direction described in Section \ref{Subsubsec: criterion2}. Panels (a)--(b): $|\lambda_1|+|\lambda_2|$ and $|\lambda_1\cdot\lambda_2|$, where $\lambda_1$ and $\lambda_2$ are the two eigenvalues of $M$ in \eqref{flow_two_modes_matrix}. Panel (c): The negative of RMSE between the truth and $\hat{u}$. Panels (d)--(f): The information gain in the posterior distribution of $\hat{u}$ related to the prior distribution and the associated signal and dispersion parts. }\label{fig: Two_modes_example}
\end{figure}

\subsection{Using Lagrangian descriptor to determine the drifter deployment}\label{Subsec:LD}
Recall the first criterion: drifters traveling long distances usually carry more information. As the flow field is time-dependent and turbulent, the location with the maximum velocity at a given time instant $t^*$ does not guarantee that putting the drifter there will result in its traveling a long distance within the time window $[t^*-\tau,t^*+\tau]$. To find the locations at $t^*$ where the deployed drifters travel the longest distances within the time window $[t^*-\tau,t^*+\tau]$, a nonlinear trajectory diagnostic technique --- the Lagrangian descriptor --- is adopted.

Denote by $\mathbf{x}=(x,y)^\mathtt{T}$ the two-dimensional displacement and $\mathbf{u}=(u,v)^\mathtt{T}$ the two-dimensional velocity field.
The general formula of the Lagrangian descriptor is as follows \cite{mancho2013lagrangian, lopesino2017theoretical, garcia2022lagrangian}
\begin{equation}\label{LD_General_Formula}
  \mathcal{L}(\mathbf{x}^*,t^*) = \int_{t^*-\tau}^{t^*+\tau} F(\mathbf{x}, t) \d t,
\end{equation}
where $F=|\tilde{F}|$ is a scalar field with positive values and $t$ is time. According to \eqref{LD_General_Formula}, $\mathcal{L}$ is the integrated modulus of $\tilde{F}$
along a trajectory from the past $t^*-\tau$ to the future $t^*+\tau$ that goes through a point $\mathbf{x}^*$ at time $t^*$. In many applications, the variable $t^*$ is fixed, and therefore the Lagrangian descriptor gives a two-dimensional contour map on the mesh grids of $x^*$ and $y^*$. One commonly used Lagrangian descriptor is by taking $F$ to be the arc length of the path traced by the trajectory, which is also the Lagrangian descriptor used in this work:
\begin{equation}\label{LD_VelocityBased}
M_{vel}(\mathbf{x}^*,t^*) = \int_{t^*-\tau}^{t^*+\tau} \sqrt{\left(\frac{\partial x}{\partial t}\right)^2+\left(\frac{\partial y}{\partial t}\right)^2} \d t = \int_{t^*-\tau}^{t^*+\tau} \sqrt{u^2+v^2} \d t.
\end{equation}
Once the Lagrangian descriptor is computed, it is usually normalized to its maximum value in space for illustration purposes. The expression in \eqref{LD_VelocityBased} explicitly depends on the velocity field while the integration is along the Lagrangian trajectory. In the standard definition in \eqref{LD_General_Formula}, the velocity field $\mathbf{u}$ and the trajectory $\mathbf{x}$ are both deterministic. However, the recovered solution from Lagrangian data assimilation contains uncertainty. Consequently, the displacement $\mathbf{x}$, driven by $\mathbf{u}$, also becomes non-deterministic.

In the presence of uncertainty, the Lagrangian descriptor in \eqref{LD_VelocityBased} is revised by taking the expectation in terms of both $\mathbf{u}$ and $\mathbf{x}$. The expectation of $\mathbf{x}$ accounts for the uncertainty of where the Lagrangian trajectories are located. The expectation of $\mathbf{u}$ is for evaluating the integrand $\sqrt{u(\mathbf{x},t)^2+v(\mathbf{x},t)^2}$ in computing Lagrangian descriptor at each possible fixed location. The former is natural, which has been considered in the previous works for analyzing Lagrangian coherent structures using different Eulerian and Lagrangian methods \cite{badza2023sensitive, rapp2020uncertain}. The latter is unique for the Lagrangian descriptor. Taking into account these uncertainties, the Lagrangian descriptor is given by
\begin{equation}\label{LD_VelocityBased_UQ}
\begin{aligned}
M_{vel}^{UQ}(\mathbf{x}^*,t^*) &= \mathbb{E}_{\mathbf{u},\mathbf{x}}\left[\int_{t^*-\tau}^{t^*+\tau} \sqrt{u(\mathbf{x},t)^2+v(\mathbf{x},t)^2} \d t\right]\\
&= \int_{t^*-\tau}^{t^*+\tau} \int_{\mathbf{x}}\int_{\mathbf{u}}\sqrt{u(\mathbf{x},t)^2+v(\mathbf{x},t)^2}p(\mathbf{u}|\mathbf{x})p(\mathbf{x}) \d\mathbf{u}\d\mathbf{x}\d t.
\end{aligned}
\end{equation}
Note that $\mathbf{u}$ is a function of $\mathbf{x}$ as the velocity depends on the location. Once $t$ is given, the distribution of $\mathbf{u}$ is obtained from the Lagrangian data assimilation. On the other hand, computing the probability of the forward or backward path $\mathbf{x}$ depends on the initial condition $\mathbf{x}^*$, time $t^*$ and the underlying velocity field $\mathbf{u}$. Thus, \eqref{LD_VelocityBased_UQ} is rewritten as
\begin{equation}\label{LD_VelocityBased_UQ_2}
M_{vel}^{UQ}(\mathbf{x}^*,t^*) = \int_{t^*-\tau}^{t^*+\tau} \left(\int_{\mathbf{x}}\mathbb{E}_{\mathbf{u}}\left[\sqrt{u(\mathbf{x},t)^2+v(\mathbf{x},t)^2}\right]p(\mathbf{x}) \d\mathbf{x}\right)\d t.
\end{equation}

In a recent work \cite{chen2023lagrangian}, an analytic approximation has been developed to compute
\begin{equation}\label{LD_VelocityBased_UQ_2_partI}
  \mathcal{F}_E(\mathbf{x},t) = \mathbb{E}_{\mathbf{u}}\left[\sqrt{u(\mathbf{x},t)^2+v(\mathbf{x},t)^2}\right]
\end{equation}
in \eqref{LD_VelocityBased_UQ_2}. The derivation of the analytic expression assumes that both $u$ and $v$ are Gaussian distributed. This is precisely the case in the Lagrangian data assimilation framework here and are good approximations when ensemble data assimilation is adopted.
With the $F_E(\mathbf{x},t)$ being computed, what remains is to estimate $p(\mathbf{x})$ to finish calculating the Lagrangian descriptor \eqref{LD_VelocityBased_UQ_2}. Unlike the velocity field, the distribution of the trajectory at a given time instant $p(\mathbf{x})$ is generally non-Gaussian. This can be seen by noting that the domain has finite support while the support of a Gaussian distribution is infinite. In addition, the governing equation of the drifter is not linear, which does not guarantee that the resulting distribution is Gaussian. Therefore, unlike the usual way of handling  underlying the velocity field, applying a direct mean-fluctuation decomposition assuming a Gaussian distribution for $p(\mathbf{x})$ is inappropriate. In general, $p(\mathbf{x})$ can be estimated by first sampling $\mathbf{u}$ and then plugging the resulting $\mathbf{u}$ to the drifter equation that gives a set of $\mathbf{x}$. A standard two-dimensional kernel density estimation with Gaussian kernels is adopted to provide an analytic expression of $p(\mathbf{x})$. The bandwidth is given by the rule-of-thumb bandwidth estimator for the two dimensions independently \cite{silverman1986density}.

In the context of the Lagrangian data assimilation in Section \ref{Subsec:LD}, the sampled time series of $\mathbf{u}$ can be obtained using the analytic formula in Proposition \ref{Prop:Sampling}. Since $\mathbf{u}$ is written in spectral form, sampling or forecasting the coefficients for different modes can be carried out independently.

Figure \ref{fig: LD_Illustration} explains how the uncertainty affects the behavior of the Lagrangian descriptor. In this numerical example, the true flow field is the same as the one used in Section \ref{Subsubsec: criterion2} but with different parameters. The three columns of the figure display three scenarios. The red dashed line in Panels (a)--(c) shows the unknown true $\hat{u}(t)$, which is either $\hat{u}(t)\equiv0$ (Scenarios I and III) or $\hat{u}(t)\equiv2.5$ (Scenario II). In this illustration example, the uncertainty is not computed precisely from the Lagrangian data assimilation. Instead, the uncertainty is prescribed as a constant over time for simplicity. The uncertainty at each time instant satisfies a Gaussian distribution. The mean equals the truth, and the standard deviation is prescribed to be $0.16$ (Scenarios I and II) or $1.6$ (Scenario III). Given the uncertainty, $50$ time series of $\hat{u}(t)$ are sampled, and $10$ of them are shown in black curves in the three panels. A similar manipulation is done for $\hat{v}(t)$. Starting from $(x(0),y(0))=(0,0)$, Panels (d)--(f) show the $50$ Lagrangian trajectories $(x(t),y(t))$ up to $t=1$. Lagrangian descriptor leads to small and large values for Scenarios I and II, which are nearly deterministic. In Scenario III, although the truth is zero, multiple sampled time series of $\hat{u}(t)$ have high values due to uncertainty. In other words, the inference indicates that the recovered velocity fields $\hat{u}(t)$ are potentially very strong. Thus, the Lagrangian trajectories corresponding to these sampled velocity fields can also travel long distances, leading to a large value of the Lagrangian descriptor \eqref{LD_VelocityBased_UQ}. Clearly, new drifters should not be placed where the Lagrangian descriptor gives a small value, such as in Scenario I. In contrast, new drifters are suggested to place at locations with large values computed from the Lagrangian descriptor. Note that the Lagrangian descriptor gives large values in both Scenario II and III but has different mechanisms. On the one hand, the large value computed from the Lagrangian descriptor in Scenario II is due to the strong true velocity field. These are the locations where the drifters are deployed as per the design principle. On the other hand, the large Lagrangian descriptor value in Scenario III is due to the large uncertainty that leads to potentially strong velocity fields. This remains consistent with the criterion that the drifters are deployed at locations with `strong velocity fields’. Yet, the velocity here is not the true velocity but the inferred velocity in the presence of uncertainty. For this reason, deploying drifters there is expected to reduce uncertainty in the surrounding area. Since the truth is unknown in practice, the two scenarios are indistinguishable from values resulting from the Lagrangian descriptor. Regardless of the two situations, deploying drifters at such locations facilitates the recovery of the flow field. In a typical case with global basis functions as the setup in this work, the Lagrangian descriptor rarely results in a significantly larger value at the locations with a static flow than that with a moderate or strong flow velocity. Drifters at locations suggested by the Lagrangian descriptor usually travel relatively long distances and reduce uncertainty. It is worth remarking that if localization is incorporated into the ensemble data assimilation, then the uncertainty at the locations with no existing drifters around can be as large as the uncertainty in the prior distribution. In such a case, those locations will be the maxima from the Lagrangian descriptor. It is natural to place drifters there to collect the missing local information.

\begin{figure}[htb]\centering
  \hspace*{-2cm}\includegraphics[width=17cm]{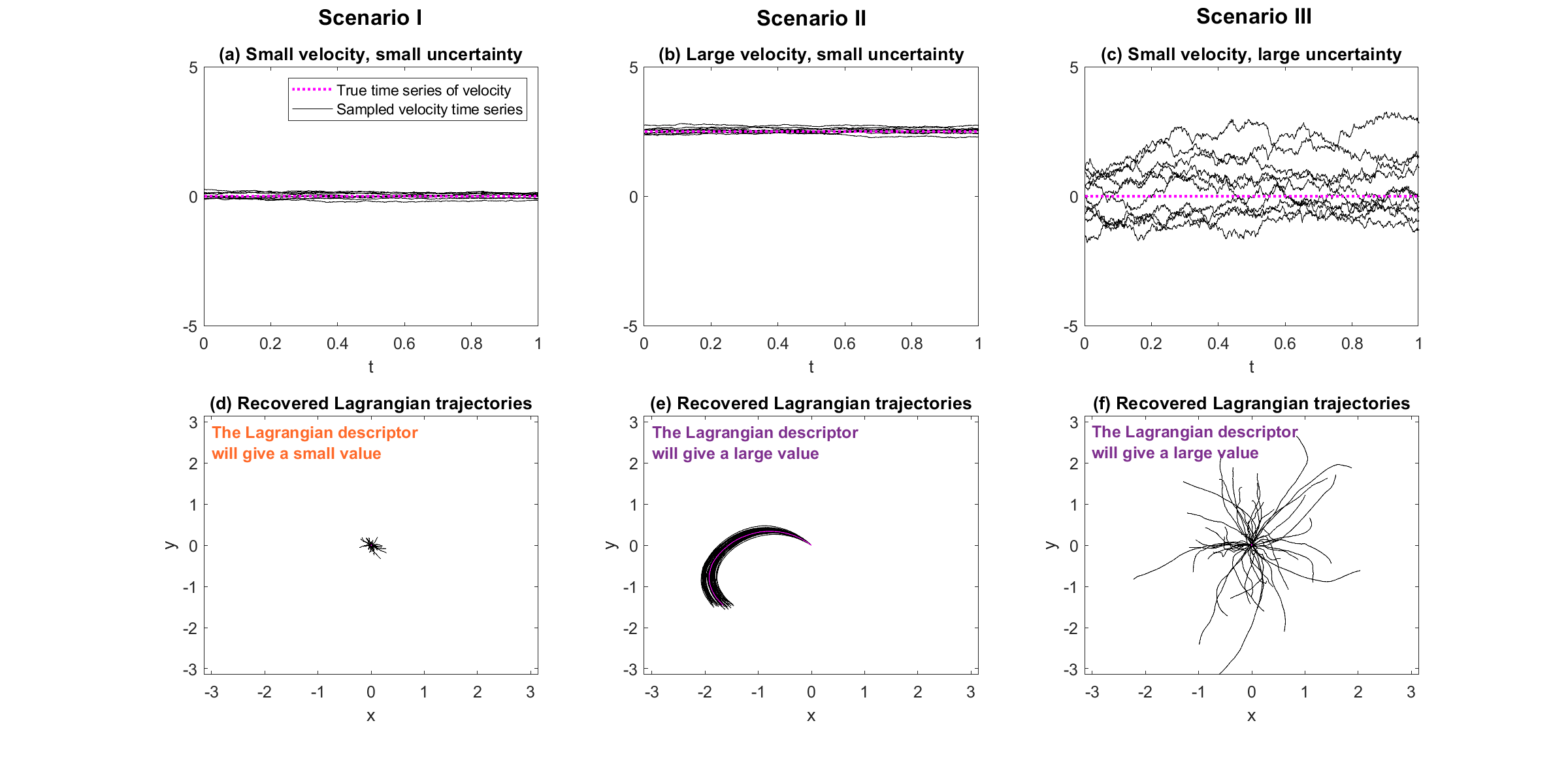}
  \caption{Illustration of how the uncertainty affects the behavior of the Lagrangian descriptor. The true flow field is the same as the one used in Section \ref{Subsubsec: criterion2} but with different parameters. Panels (a)--(c): The unknown true $\hat{u}(t)$ (red) and $10$ out of $50$ sampled velocity time series (black) due to the prescribed uncertainty of $\hat{u}(t)$. Panels (d)--(f): The $50$ Lagrangian trajectories $(x(t),y(t))$ up to $t=1$. }\label{fig: LD_Illustration}
\end{figure}

Finally, regarding the second criterion of separating drifters, a minimum distance $D_{min}$ is prescribed between the new drifters and the distance between each new drifter and all the existing ones. This minimum distance is chosen empirically which can be determined by the minimum resolved spatial scale. Then the new drifters are deployed at the maxima resulting from the Lagrangian descriptor.

\section{Application to Multiscale Turbulent Flows}\label{Sec:Numerics}
\subsection{Setup}
The underlying flow model is given by \eqref{Ocean_Velocity}--\eqref{OU_process} with doubly periodic boundary conditions. The flow field is assumed incompressible, and no mean background flow is included. The maximum Fourier wave number for both $k_1$ and $k_2$ is taken to be $K_{\mbox{max}}=4$ such that there are a total of $80$ Fourier modes. The parameters are
\begin{equation}\label{Parameters_eddy_model}
  d_\mathbf{k} = 0.5,\qquad \omega_\mathbf{k}=0,\qquad f_\mathbf{k}=0\qquad \mbox{and}\qquad \sigma_\mathbf{k} = 0.5.
\end{equation}
for all $\mathbf{k}$ such that the flow field has an equipartition of the energy. The initial distribution of drifters is uniform, consistent with the statistical equilibrium state \cite{chen2014information}. The time instant $t^*=5$ is chosen for deploying new drifters. Note that since the decorrelation time of all Fourier coefficients is only $2$ time units, the initial value has little impact on the flow field at $t^*=5$. Two sets of experiments are studied here. In the first case, there are $L_1=32$ existing drifters, and the goal is to place $L_2 = 6$ additional drifters aiming to maximize the information gain within the time interval $t\in[4,6]$. With a relatively large number of drifters, the posterior mean is expected to be pretty accurate, and the uncertainty reduction mainly lies in the covariance. In the second case, the number of the existing drifters is reduced to $L_1=12$ such that biases appear in the mean and large uncertainty emerges in the covariance. The goal is to maximize the averaged information gain within the time interval $t\in[4.5,5.5]$. Note that the size of the interval is shortened due to the larger uncertainty. This mimics the realistic situations where larger initial uncertainty means shortening potential predictability. When the Lagrangian descriptor is applied to determine the locations for deploying the $L_2$ new drifters, the same time interval as the target for the information gain is chosen for the associated path integration. Once the new drifters are placed at $t^*=5$, the governing equations of the drifters are integrated forward and background to create the corresponding Lagrangian trajectories, which are then used as the observations for the data assimilation.

The procedure for the experiments are summarized as follows.
\begin{enumerate}
  \item [Step 1.] Use the flow model \eqref{Ocean_Velocity}--\eqref{OU_process} to generate the true underlying flow field for $t\in[0,T]$.
  \item [Step 2.] Use the drifter equation \eqref{Tracer_eqn} to create $L_1$ Lagrangian trajectories within the same interval.
  \item [Step 3.] Apply the Lagrangian data assimilation \eqref{Smoother_Main}--\eqref{Sampling_Main} to recover the underlying flow field, including the associated uncertainty.
  \item [Step 4.] Apply the Lagrangian descriptor \eqref{LD_VelocityBased_UQ} to compute the two-dimensional map, where the path integration is taken within the interval $[t^*-\tau,t^*+\tau]$ that is a subset of $[0,T]$.
  \item [Step 5.] Based on the solution from the Lagrangian descriptor and the distance criterion with a prescribed $D_{min}$, deploy $L_2$ additional drifters at time $t^*$.
  \item [Step 6.] Starting from $t^*$, integrate backward and forward using \eqref{Tracer_eqn} to create the trajectories of these $L_2$ drifters.
  \item [Step 7.] Apply Lagrangian data assimilation \eqref{Smoother_Main} using all these $L_1+L_2$ drifters to obtain the posterior distribution.
  \item [Step 8.] Use the information metric \eqref{Signal_Dispersion} to compute the information gain.
\end{enumerate}
The initial condition of data assimilation for all modes is chosen to be an independent standard Gaussian distribution $\mathcal{N}(0,0.1^2)$. When $t^*-\tau$ is far from the initial time $t=0$, the data assimilation results have almost no dependence on the initial condition and are fully determined by balancing the model and observational uncertainties. A similar reason applies to the other side of the interval.  When $t^*+\tau$ is sufficiently distanced from the endpoint $t=T$, the smoother estimate at $t^*+\tau$ exploits all useful information in the future.

\subsection{Results}
Let us start with the case that $L_1=32$. Figure \ref{fig: DA_L32_final} shows the snapshot of the true flow field at $t=5$ (Panel (a)) and the recovered one based on the posterior mean estimate (Panel (b)) using these existing $L_1$ drifters. The figure also includes a comparison between the true signal of four different Fourier modes (blue) and the posterior mean time series (red), as well as the associated uncertainty in the posterior estimate (red shading). The uncertainty shown here is the two standard deviations of the posterior distribution at each time instant. With such a large number of existing drifters, the posterior mean captures the truth with relatively high accuracy. The posterior standard deviation is noticeable though it is not significant.

\begin{figure}[htb]\centering
  \hspace*{-2cm}\includegraphics[width=17cm]{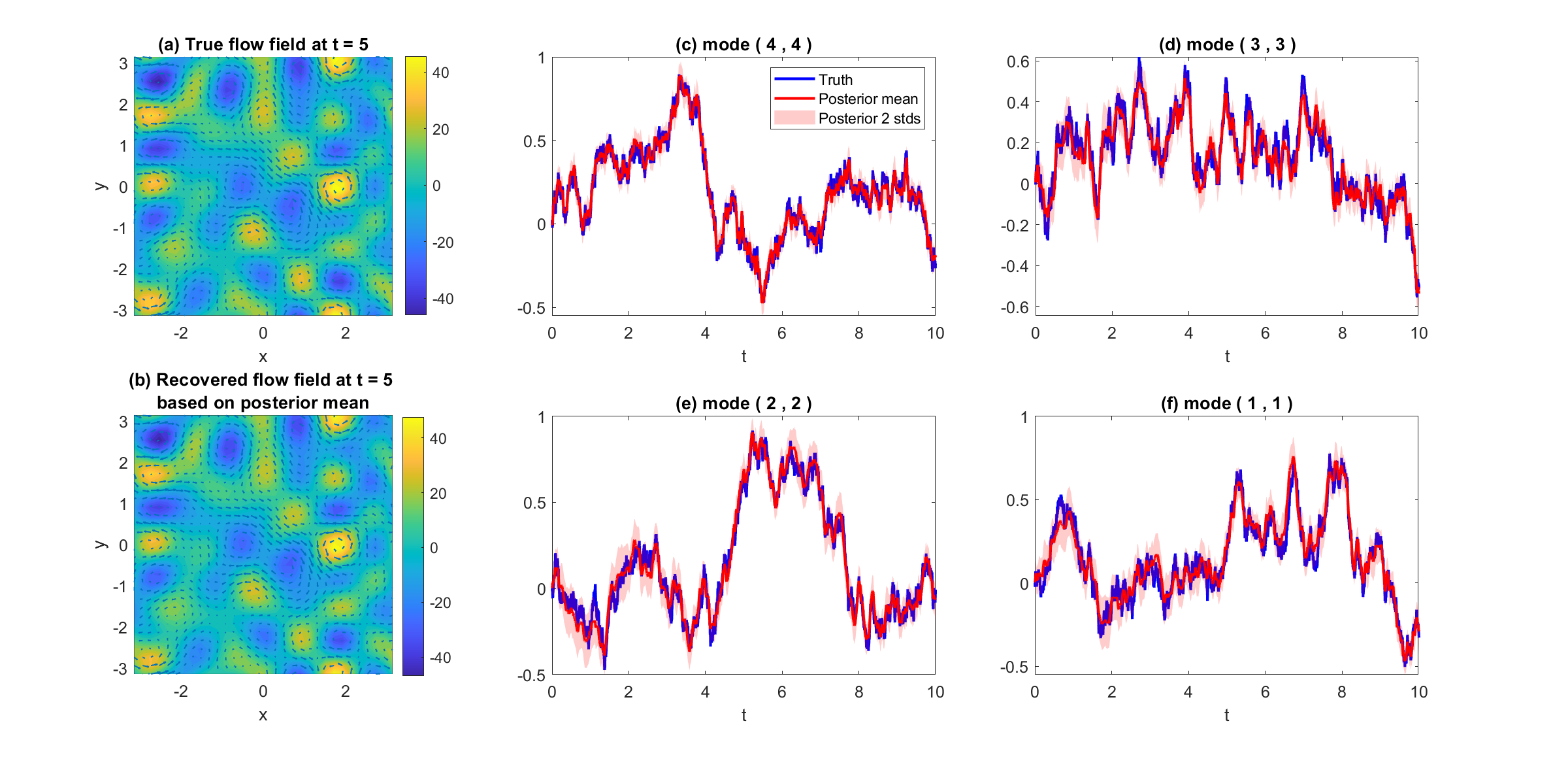}
  \caption{Comparison of the true flow field with the estimated one using the original $L_1$ drifters. Panel (a): The snapshot of the true flow field at $t=5$. Panel (b): The recovered flow field based on the posterior mean estimate. Panels (c)--(f): The true signal of four different Fourier modes (blue), the associated posterior mean time series (red), as well as the two standard deviations of the posterior distribution (red shading). In this figure, $L_1=32$ is used. }\label{fig: DA_L32_final}
\end{figure}

Figure \ref{fig: Tracer_locations_L32_final} aims to illustrate the importance of the distance criterion. Panel (a) shows the total information gain after the $L_2=6$ new drifters are launched into the flow field at the maxima resulting from the Lagrangian descriptor. The information gain is displayed as a function of the minimum distance $D_{min}$. Because of the relatively large number of drifters, there will be no feasible solutions for placing the $L_2$ new drifters when $D_{min}\geq 1$. Noticeably, the information gain remains high once $D_{min}\geq0.6$. Panels (b)--(h) display the locations of the $L_2$ new drifters launched at the areas corresponding to the maxima of the Lagrangian descriptor (shown in the background contour map) with different choices of $D_{min}$. The information gain increases overall as the distance $D_{min}$, which implies the necessity of considering the distance criterion. If $D_{min}$ is tiny, for example, in the case in Panel (b), then all the $L_2$ new drifters will be concentrated in a local area. These drifters will carry similar information in such a situation, and the total information gain is limited. The drifters become more separated as $D_{min}$ increases. Therefore, they can collect information on the global flow field.

\begin{figure}[htb]\centering
  \hspace*{-2cm}\includegraphics[width=17cm]{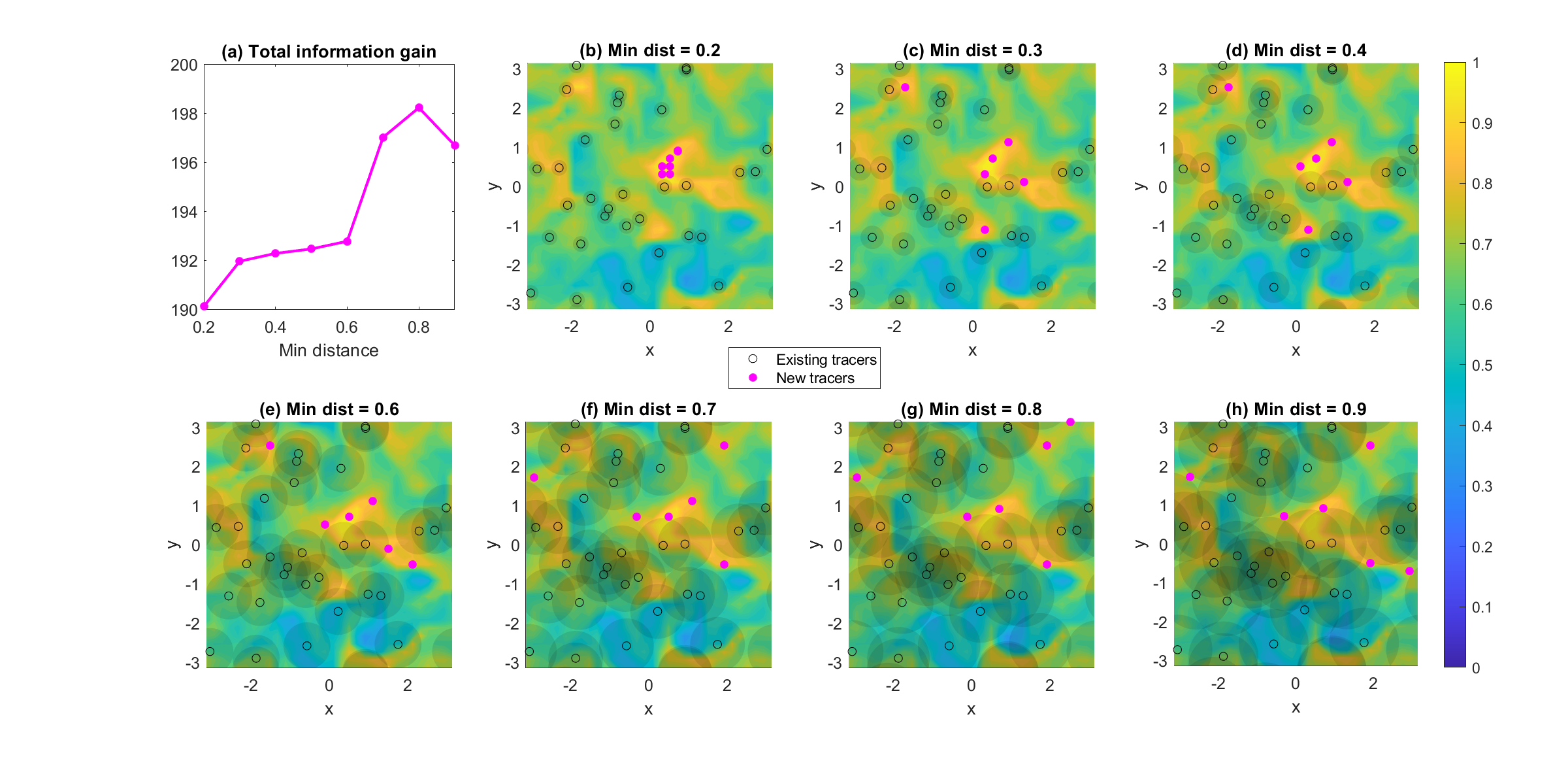}
  \caption{The total information gain using $L_1+L_2=32+6$ drifters, the locations of the existing $L_1$ drifters, and the newly added $L_2$ drifters with different minimum distance $D_{min}$. Panel (a): The total information gain using $L_1+L_2$ drifters as a function of $D_{min}$. Panels (b)--(h): The locations of the existing $L_1$ drifters (black dots) and the newly added $L_2$ ones (magenta dots) at $t=t^*=5$. The semi-transparent black shading circle around each existing drifter shows the area within the minimum distance $D_{min}$ of that drifter. The newly added $L_1$ drifters are also separated by at least $D_{min}$ from each other, but the circles are not shown here. The background contour in each panel is the two-dimensional map computed from the Lagrangian descriptor \eqref{LD_VelocityBased_UQ}, which is used as part of the criterion to determine the locations of the $L_2$ new drifters. }\label{fig: Tracer_locations_L32_final}
\end{figure}

Next, Figure \ref{fig: Compare_with_random_L32} compares the information gain based on the proposed strategy (the magenta line) and three other methods. It aims to illustrate the importance of using the Lagrangian descriptor to determine the locations of deploying the new drifters. All the simulations in this figure have $D_{min}=0.8$, and therefore the drifters are separate enough from each other. The green dots correspond to the method that the $L_2$ drifters are placed randomly, satisfying a uniform distribution, to the locations that guarantee all the $L_2$ new drifters are distanced by at least $D_{min}$ between each other and are at least $D_{min}$ from the existing $L_1$ ones. In other words, such a strategy still satisfies the distance criterion but does not exploit the result using the Lagrangian descriptor. On the other hand, the gray dots correspond to a different method, in which the location of each of the $L_2$ new drifters is given by a pair of random numbers drawn from a uniform distribution in the entire $[-\pi,\pi]^2$ domain. This somewhat allows the placed drifters to separate but does not precisely satisfy the proposed distance criterion. There are, in total, $100$ green dots and $100$ gray dots, which are experiments with independent random number generators. For completeness, the blue line shows the information gain when the new drifters are placed at the minima of the Lagrangian descriptor map instead of the maxima, where the distance criterion is still applied. As the magenta line is above most of the green dots and the blue line, the significance of seeking the maxima of the Lagrangian descriptor is justified. Notably, the magenta line is always above the gray dots, indicating that both criteria are essential. The dispersion is the main contributor to the total information gain among the two components. This is intuitive, as the posterior mean is already close to the truth signal. Therefore, the major part of the uncertainty reduction lies in the covariance. It is worth highlighting that the standard path-wise measurements cannot assess the uncertainty associated with the covariance part, implying the insufficiency in using these path-wise measures to determine the drifter locations or quantify the uncertainty reduction.

\begin{figure}[htb]\centering
  \hspace*{-2cm}\includegraphics[width=17cm]{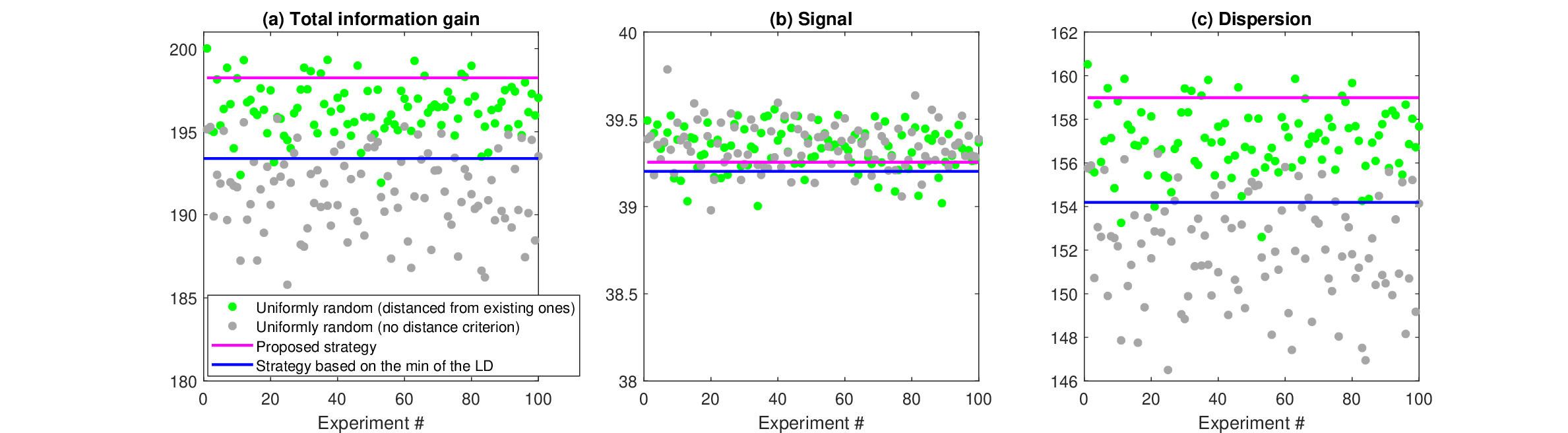}
  \caption{Comparison of the information gain based on the proposed strategy (the magenta line) and three other methods. Panel (a): The total information gain. Panels (b)--(c): The information gain in the signal and the dispersion parts. The magenta line shows the information gain using the proposed strategy. The green dots correspond to the method that the $L_2$ drifters are placed randomly, satisfying a uniform distribution, to the locations that guarantee all the $L_2$ new drifters are distanced by at least $D_{min}$ between each other and are from the existing $L_1$ drifters. In other words, such a strategy still satisfies the distance criterion but does not exploit the result using the Lagrangian descriptor. The gray dots correspond to a different method, in which the location of each of the $L_2$ new drifter is given by a pair of random numbers drawn from a uniform distribution in the entire $[-\pi,\pi]^2$ domain. This somewhat allows the placed drifters to separate but does not satisfy either of the proposed criteria. There are, in total, $100$ green dots and $100$ gray dots, which are experiments with independent random number generators. For completeness, the blue line shows the information gain when the new drifters are placed at the minima of the Lagrangian descriptor map instead of the maxima, where the distance criterion is still applied.  }\label{fig: Compare_with_random_L32}
\end{figure}

Figure \ref{fig: Example_random_L32_final} compares the drifter deploying strategies using different Lagrangian descriptors. Panel (a) shows the results when the deterministic version of the Lagrangian descriptor is used. Here, the deterministic Lagrangian descriptor is computed exploiting the true signal \eqref{LD_VelocityBased}, which is unknown in practice. In contrast, Panel (b) shows the one in the strategy developed here where the posterior uncertainty is incorporated \eqref{LD_VelocityBased_UQ}. One interesting finding is that, based on these two versions of the Lagrangian descriptor, the information gain (IG) in Panel (b) (IG$=198.224$) is higher than that in Panel (a) (IG$=196.549$). Since the maxima computed from the deterministic Lagrangian descriptor correspond to the drifters with the longest trajectories, the results highlight that letting drifters move the longest distances may not necessarily maximize the uncertainty reduction. In other words, computing the Lagrangian descriptor using the true flow field, which is unknown in practice, does not always provide the optimal result for uncertainty reduction. This is reasonable as the truth does not reflect the resulting uncertainty using the $L_1$ existing drifters. Such a finding is fundamentally different from those using strategies based on path-wise measures, which often aim to minimize errors with respect to the unknown truth. The Lagrangian descriptor incorporating the uncertainty guides the drifter deployment that reduces the uncertainty to a large extent. In both panels, the drifter deployment based on the green dot of Experiment \#1 in Figure \ref{fig: Compare_with_random_L32}, the best result from a random deployment satisfying the distance criterion, is also shown in green dots. These drifters fill in the areas with no existing ones more effectively. In addition, they are still mostly located at places with large values of the Lagrangian descriptor. Thus, this set of new drifters leads to slightly higher information gain.

\begin{figure}[htb]\centering
  \hspace*{-2cm}\includegraphics[width=17cm]{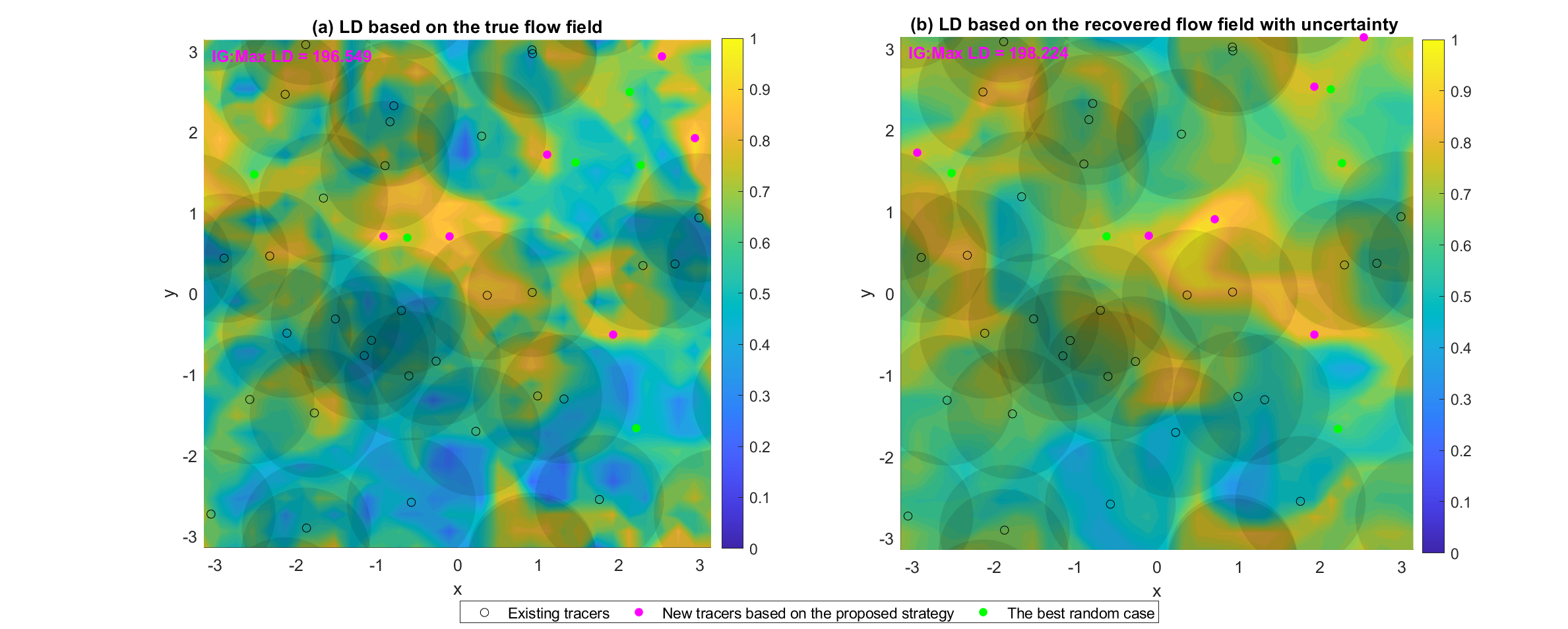}
  \caption{Comparison of the drifter deploying strategies using different Lagrangian descriptors. Panel (a): The results when the deterministic version of the Lagrangian descriptor is used. Here, the deterministic Lagrangian descriptor is computed exploiting the true signal \eqref{LD_VelocityBased}, which is unknown in practice. Panel (b): The one in the strategy developed here where the posterior uncertainty is incorporated \eqref{LD_VelocityBased_UQ}. In both panels, the locations of the new $L_2$ drifters using the proposed strategy based on the maxima of the Lagrangian descriptor (magenta dots) are shown. The drifter deployment based on the green dot of Experiment \#1 in Figure \ref{fig: Compare_with_random_L32}, the best result from a random deployment satisfying the distance criterion, is also shown in green dots. The information gain (IG) corresponding to the strategies of putting drifters at the maxima of the Lagrangian descriptor is listed on the left of each panel.    }\label{fig: Example_random_L32_final}
\end{figure}

Figures \ref{fig: DA_L12_final}--\ref{fig: Compare_with_random_L12} show the results when $L_1=12$ drifters are initially included. Due to the diminishing of the existing drifter numbers, the recovered flow field is less accurate, which leads to larger uncertainty. See Figure \ref{fig: DA_L12_final}. Figure \ref{fig: Tracer_locations_L12_final} shows the information gain as a function of $D_{min}$. Since the number of the existing drifters is decreased,  $D_{min}$ can be as large as $D_{min}=1.2$. Similar to the results in Figure \ref{fig: Tracer_locations_L32_final}, the information gain remains at a high level once $D_{min}$ is larger than a certain value (which is $D_{min}=0.7$ in this case), indicating the robustness of the strategy. Likewise, the results in Figure \ref{fig: Compare_with_random_L12} confirm that the proposed strategy outperforms the other methods, consistent with the conclusion in Figure \ref{fig: Compare_with_random_L32}. Finally, qualitatively similar conclusions are reached when the new drifters are deployed at other time instants $t^*$ (not shown here), which reinforces the robust performance of the proposed strategy.

\begin{figure}[htb]\centering
  \hspace*{-2cm}\includegraphics[width=17cm]{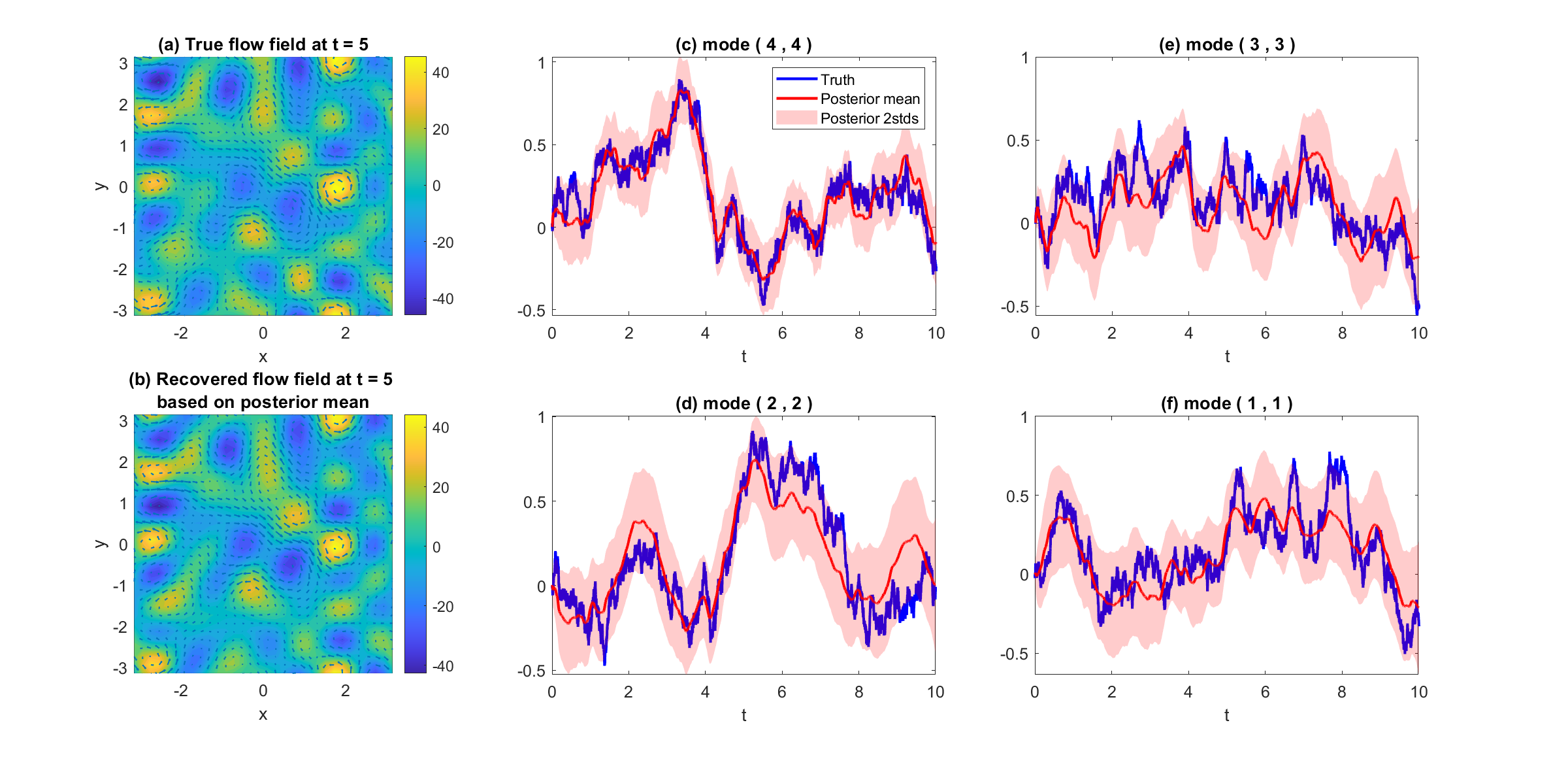}
  \caption{Similar to Figure \ref{fig: DA_L32_final} but with $L_1=12$. }\label{fig: DA_L12_final}
\end{figure}

\begin{figure}[htb]\centering
  \hspace*{-2cm}\includegraphics[width=17cm]{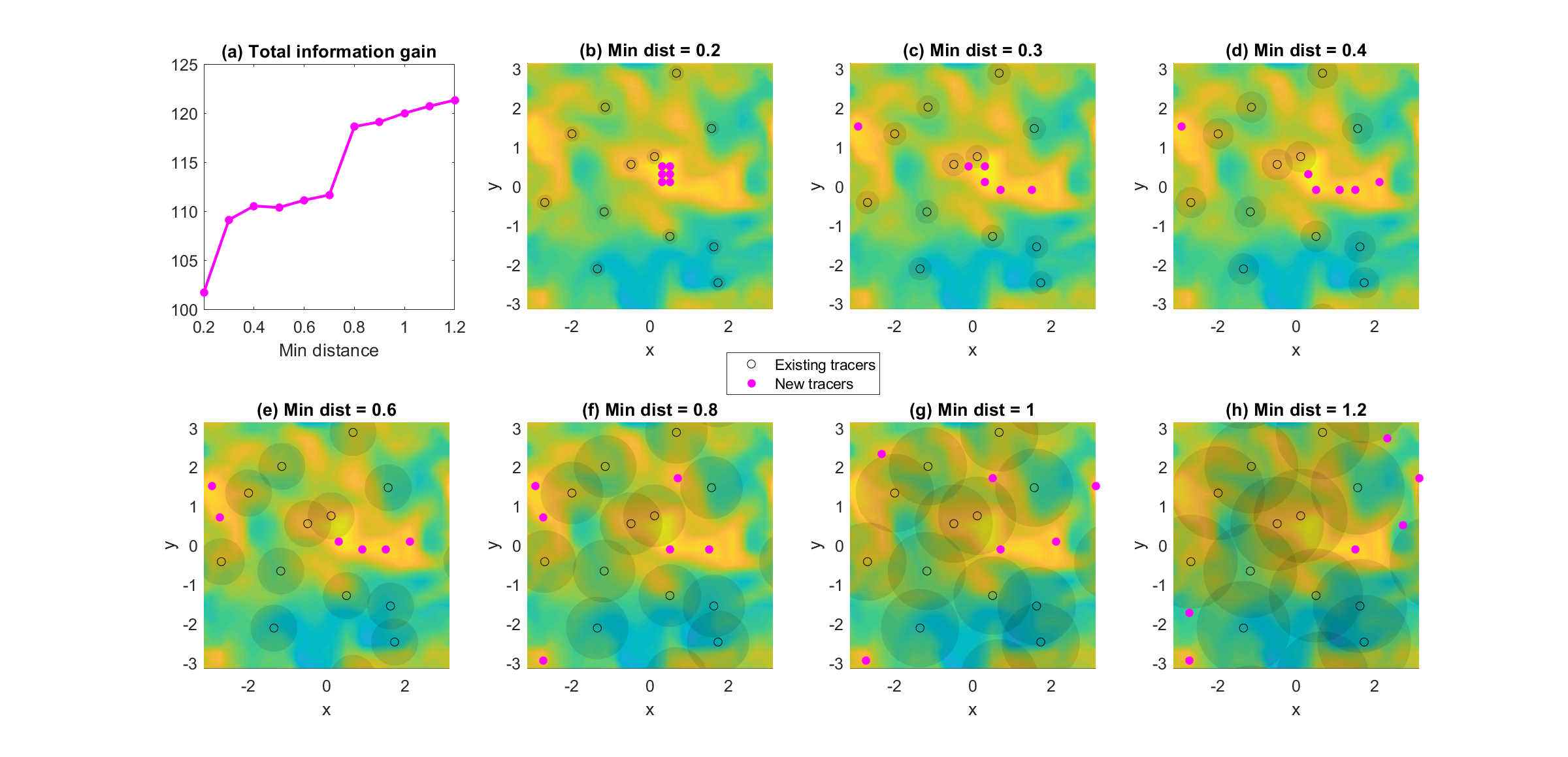}
  \caption{Similar to Figure \ref{fig: Tracer_locations_L12_final} but with $L_1=12$.}\label{fig: Tracer_locations_L12_final}
\end{figure}

\begin{figure}[htb]\centering
  \hspace*{-2cm}\includegraphics[width=17cm]{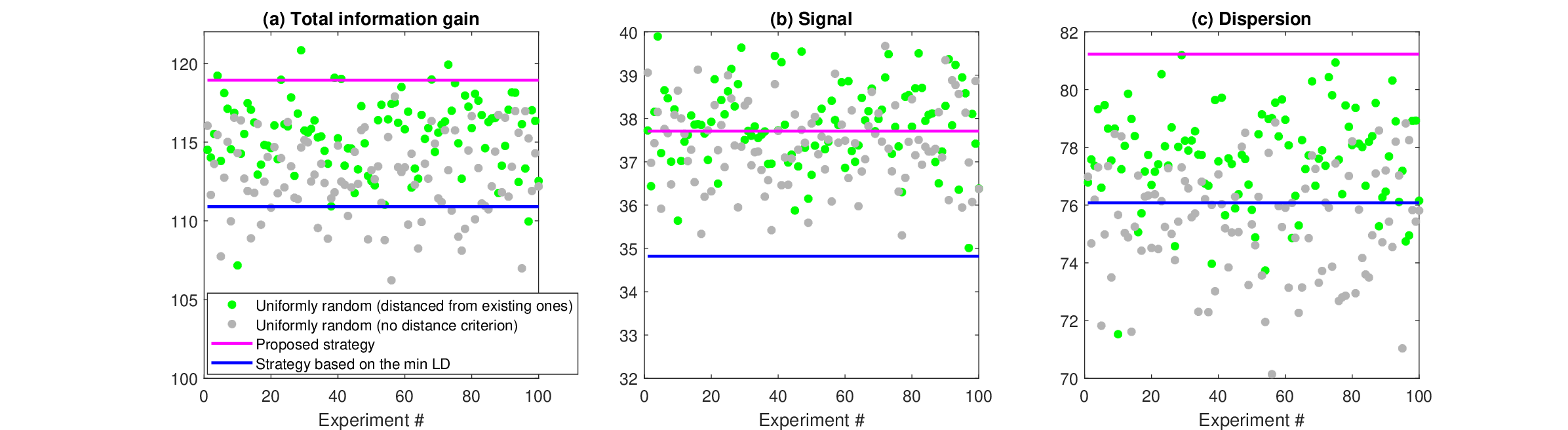}
  \caption{Similar to Figure \ref{fig: Compare_with_random_L32} ($D_{min}=0.8$) but with $L_1=12$.  }\label{fig: Compare_with_random_L12}
\end{figure}

\section{Discussions and Conclusion}\label{Sec:Conclusion}
In this paper, a new computationally efficient strategy for deploying Lagrangian drifters that highlights the central role of uncertainty is developed. A nonlinear trajectory diagnostic approach that underlines the importance of uncertainty is utilized to construct a phase portrait map. It consists of both the geometric structure of the underlying flow field and the uncertainty in the recovered state. The drifters are deployed at the maxima of this map and are required to be separated enough. Such a strategy allows the drifters to travel the longest distances to collect both the local and global information of the flow field. It also facilitates the reduction of a significant amount of uncertainty. An information metric is introduced to assess the performance of the strategy. Fundamentally different from the traditional path-wise measurements, the information metric quantifies the information captured by the entire estimated distribution that naturally considers the uncertainty reduction. The information metric also avoids using the unknown truth for assessing the skill of the strategy, making the method practical. Mathematical analysis exploiting simple illustrative examples is used to validate the strategy developed here. Numerical simulations based on multiscale turbulent flows are then adopted to demonstrate the practical performance of the new strategy.

The main focus of this work is to provide a systematic understanding of the strategy. Therefore, the same interval $[t^*-\tau, t^*+\tau]$ is utilized to compute the Lagrangian descriptor and assess the information gain after the new drifters are deployed at time $t^*$. Although the setup used in the simulations of this work is not real-time, the strategy can be easily extended to real-time forecast situations. Assume the existing $L_1$ drifter trajectories are available up to the current time instant $t^*$. The filtering solution \eqref{eq:filter} at $t^*$ is used as the initial condition to forecast the ocean field up to a future time instant $t^*+\tau$ using an ensemble forecast method. Then the Lagrangian descriptor \eqref{LD_VelocityBased_UQ} considering the forecast uncertainty is computed within the future interval $[t^*,t^*+\tau]$, which indicates the distance traveled by each potential trajectory starting from different locations at $t^*$. The maxima of the resulting map are the locations to deploy the $L_2$ new drifters at the current time $t^*$ that maximize the information collected by these drifters within the finite future time interval $[t^*,t^*+\tau]$.

Recall in this work that all the $L_2$ new drifters are deployed at the same time. Another future direction is to deploy the drifters sequentially. This can be done by deploying one drifter each time and then recomputing the Lagrangian descriptor to rectify the resulting map for launching the next drifter. This will guarantee that the deploying strategy for each drifter is optimal. The sequential strategy requires repeatedly computing the Lagrangian descriptor and the data assimilation (or forecast) solutions. Nevertheless, if a computationally efficient data assimilation, such as the one used in this work, is available, then the computational cost will not increase significantly. It is interesting to study the additional benefit of using such a sequential method for deploying the drifters. Finally, only incompressible flows are used in this study. Understanding the skill of the strategy developed here for compressible flow fields will be another interesting topic for future research.

\section*{Acknowledgement}
The research of N.~C.~ is funded by ONR N00014-19-1-2421 and ARO W911NF-23-1-0118. The research of E.~L.~ is supported by ONR N0001423WX01622. S~.W.~acknowledges the financial support provided by the EPSRC Grant No. EP/P021123/1 and the support of the William R. Davis '68 Chair in the Department of Mathematics at the United States Naval Academy.



\bibliographystyle{elsarticle-num}





\end{document}